\newtheorem{proposition}{Proposition}{}
\title{Time-Domain Channel Estimation for Extremely Large MIMO THz Communication Systems Under Dual-Wideband Fading Conditions}
\author{Evangelos Vlachos,~\IEEEmembership{Member,~IEEE}, Aryan Kaushik,~\IEEEmembership{Member,~IEEE}, Yonina C. Eldar,~\IEEEmembership{Fellow,~IEEE},\\ and George C. Alexandropoulos,~\IEEEmembership{Senior~Member,~IEEE}
\thanks{This work has been supported by the Smart Networks and Services Joint Undertaking (SNS JU) project TERRAMETA under the European Union’s Horizon Europe research and innovation programme under Grant Agreement No 101097101, including top-up funding by UK Research and Innovation
(UKRI) under the UK government's Horizon Europe funding guarantee.}
\thanks{E. Vlachos is the Industrial Systems Institute, ATHENA Research and Innovation Centre, 26504 Rio-Patras, Greece (e-mail: evlachos@athenarc.gr).}
\thanks{A. Kaushik is with the School of Engineering and Informatics, University of Sussex, UK (e-mail: aryan.kaushik@sussex.ac.uk).}
\thanks{Y. C. Eldar is with the Faculty of Mathematics and Computer Science, Weizmann institute of Science, Israel (e-mail: yonina.eldar@weizmann.ac.il).}
\thanks{G. C. Alexandropoulos is with the Department of Informatics and Telecommunications, National and Kapodistrian University of Athens, 15784 Athens, Greece and with the Department of Electrical and Computer Engineering Department, University of Illinois Chicago, Chicago, IL 60601, USA (e-mail: alexandg@di.uoa.gr).}
}
\begin{document}
\maketitle
\begin{abstract}
In this paper, we study the problem of extremely large (XL) multiple-input multiple-output (MIMO) channel estimation in the terahertz (THz) frequency band, considering the presence of propagation delays across the entire array apertures at both communication ends, which naturally leads to frequency selectivity. This problem is known as beam squint and may be pronounced when communications are subject to multipath fading conditions. Multi-carrier (MC) transmission schemes, which are usually deployed in THz communication systems to address these issues, suffer from high peak-to-average power ratio, which is specifically dominant in this frequency band where low transmit power is mostly feasible. Furthermore, the frequency selectivity caused by severe molecular absorption in the THz band necessitates delicate consideration in MC system design. Motivated by the benefits of single-carrier (SC) waveforms for practical THz communication systems, diverging from the current dominant research trend on MC systems, we devise a novel channel estimation problem formulation in the time domain for SC XL MIMO systems subject to multipath signal propagation, spatial wideband effects, and molecular absorption. An efficient alternating minimization approach is presented to solve the proposed mixed-integer sparse problem formulation. The conducted extensive performance evaluation results validate that the proposed XL MIMO estimation scheme exhibits superior performance than conventional SC- and MC-based techniques, approaching the idealized lower bound.
\end{abstract}

\begin{IEEEkeywords}
Channel estimation, beam squint, extremely large MIMO, THz, alternating minimization, single-carrier transmission, molecular absorption, sparse estimation.
\end{IEEEkeywords}

\section{Introduction}\label{introduction}
Terahertz (THz) communications (in the range of $0.1-10$ THz) have recently received remarkable attention within the global wireless community due to their increased potential for seemless data transfer, wide bandwidth that can theoretically reach up to hundreds of gigahertz (GHz), data rates of the order of terabits per second enabling ultra–fast downloading for immersive applications, and latency in the order of microseconds ($\mu$sec)~\cite{9794668}. Therefore, it has been recently recognised as one of the promising candidate technology for future sixth generation (6G) wireless networks~\cite{6gvision}. 

To confront with the high penetration loss at the THz frequency band, extremely large (XL) multiple-input multiple-output (MIMO) are being considered~\cite{XLMIMO_tutorial}, capable of realizing highly directive beamforming. However, due to the ultra-high bandwidth employed in THz communications, the propagation delay across the large antenna arrays at the communication terminals can exceed the sampling period. This spatial-wideband effect causes the so-called \textit{beam squint} in the frequency domain, according to which the angle-of-arrival (AoA) varies with frequency, and consequently, the array gain becomes frequency selective~\cite{8882325}. Additionally, certain frequency ranges within the THz band suffer from severe molecular absorption loss, according to which the wave energy within the propagation medium converts into internal kinetic energy of molecules. This phenomenon further contributes to path loss and frequency selectivity in THz communications~\cite{5995306}.

\subsection{Literature Review}

The predominant literature in channel estimation with a specific focus on the beam-squint effect revolves around schemes relying on Orthogonal Frequency Division Multiplexing (OFDM)~\cite{8882325, 9049103, 9351751, xu_overcoming_2023}. However, this modulation scheme often grapples with the challenge of a high peak-to-average power ratio (PAPR), a predicament exacerbated in the context of ultra-high-frequency transmissions in the THz range where mainly low transmit power levels are feasible up to date. Furthermore, THz-specific channel-induced impairments and the presence of phase noise have been lately documented, posing additional hurdles for multi-carrier (MC) transmission strategies. On the other hand, single-carrier (SC) waveforms are known for having lower PAPRs compared to OFDM which makes them robust to system impairments and phase noise. Especially for THz communications, due to the low output power and the non-linearity effect induced by the available THz power amplifiers (PAs), it is preferable to use SC transmissions rather than OFDM~\cite{9866847, 8645479}. Moreover, the first sub-THz standard (IEEE 802.15.3d \cite{8066476}) describes an SC modulation mode to support long range and high data rate wireless applications (such as $100$ Gbps).

For millimeter wave (mmWave) communications, where the spatial-wideband effect also is present, there have appeared numerous published articles relaying on OFDM transmissions. In~\cite{8882325}, the authors presented a parametric channel estimator where the delay and angles were derived for all the subcarriers. While this approach simplifies the estimation problem, careful selection of parameters is crucial to avoid ill-conditioning and potential divergence from the solution. In~\cite{9049103}, the dual-wideband effect was addressed for mmWave MIMO OFDM for parametric channel estimation. To avoid unstable initialization, the approach relied on tensor-based modeling and problem decomposition. A scenario with single-antenna multiple users was addressed in~\cite{9351751}, where the channel parameters were obtained via the maximum a posteriori criterion. 

An adaptive deep learning approach for THz XL MIMO channel estimation was proposed in ~\cite{10143629}. To mitigate wideband effects, the authors suggest extending their technique by employing parallel streams for each subcarrier, leveraging the learned codebooks. However, given that THz systems will have large bandwidths, and thus, many subcarriers, this approach could lead to very high computational complexity. In \cite{peng_precoding_2019}, focusing on a multi-user scenario with single antennas and SC transmissions, a joint precoding and signal detection technique was proposed that capitalized on the sparsity property of THz channels and utilized the least-square QR algorithm. In \cite{8645479}, an SC sparsity-based algorithm was proposed for indoor THz channel estimation, which incorporated high molecular absorption, but did not consider XL MIMO systems neither the spatial wideband effect. 

To extend the communication distance and efficiency in THz wireless systems, several advanced architectures have been lately proposed, including adaptive designs at the physical layer~\cite{8778669, FD_HMIMO_2023, 9205899} and time reversal~\cite{timereversal6g}, XL MIMO \cite{9216613,10045774,9324910}, and multi-functional RISs \cite{hardware2020icassp,HRIS_Mag,CE_overview_2022,amplifying_RIS_2022,singh2023indexed, pimrc2023,10119089}. At this frequency band, due to the small form factor and inter-element spacing, it is feasible to design XL antenna arrays with very large numbers of antenna elements, which enables highly directive beamforming that can combat the high propagation loss. Typically, in UPAs, hundreds, or even thousands, of densely packed antennas are being considered~\cite{9794668,9324910}. These systems are of special interest since they can effectively increase communication range, thus, further enhance capacity, as well as angular resolution in THz wireless networks. On another direction, RISs have recently emerged as a promising new paradigm to achieve smart and reconfigurable wireless propagation environments~\cite{Strinati2021Reconfigurable,Alexandropoulos2022Pervasive,RIS_challenges}, and their XL versions are lately being studied for THz communications and sensing~\cite{TERRAMETA_website}.

\subsection{Motivation and Contributions}
Although the problems of beam squint and channel estimation have been extensively studied for MC systems in THz communications, the investigation of the former on SC MIMO transmissions has been significantly limited. As previously mentioned, the low output power and the non-linearity induced by available THz PAs to date motivates the adoption of SC transmissions rather than OFDM \cite{9866847}. To this end, \cite{kim_spatial_2021} presented a channel estimation technique for an SC mmWave system with single-antenna users. In this paper, we focus on the general estimation problem of XL channel matrices in SC point-to-point THz MIMO systems subject to the beam-squint effect. Our contributions are summarized as follows:
\begin{itemize}
\item We present a novel \textit{time-domain model} for the recovery of the structured channel matrix under THz XL MIMO communications. The beam-squint effect for both the transmitter (TX) and receiver (RX) is modeled along with the propagation path delays which introduce inter-symbol interference. It is noted that prior SC studies (e.g., \cite{9399122}) are usually assuming single-antenna transmitters, hence, ignoring the double-sided effect appearing in symmetric MIMO systems, which is even more pronounced when considered with multipath signal propagation conditions.
\item The effect of molecular absorption further amplifies the overall fading and the system's frequency selectivity at THz frequencies \cite{8123513}. This phenomenon in conjunction with multipath propagation and beam squint at both the TX and RX have not been previously considered in the context of XL MIMO THz communications. These factors collectively create \textit{dual-wideband fading conditions}. Our novel SC modeling approach facilitates the unified treatment of frequency selectivity arising from all these factors by employing multi-tap filtering at the RX.
\item We introduce a novel \textit{mixed-integer sparse problem} formulation that effectively incorporates the dual-wideband effects into channel estimation. The proposed formulation readily accommodates the application of efficient optimization techniques, and in this case, we adopt the alternating direction method of multipliers (ADMM).
\end{itemize}

The performance of the proposed channel estimation approach is investigated via extensive simulation results for varying system and channel parameters, as well as through comparisons with benchmark techniques that are available in the open technical literature.

\begin{table}
  \centering
  \caption{The Mathematical Symbols used in this Paper.}
  \begin{tabular}{r|l}
    \toprule
    $a, \mathbf{a}$, and $\mathbf{A}$ &  Scalar, vector, and matrix \\
    $(\cdot)^*$ & The complex conjugate of the input \\
    $j \triangleq \sqrt{-1}$ & The imaginary unit \\
    $\lceil \cdot \rceil$ & Smallest integer greater/equal of the input \\
    $\langle \mathbf{X}, \mathbf{Y} \rangle$ & Indicates the operation $\mathbf{X}^{\rm H} \mathbf{Y} + \mathbf{Y}^{\rm H} \mathbf{X}$ \\
    $\mathbf{A}^{\rm T}$, $\mathbf{A}^{\rm H}$, and $\mathbf{A}^{-1}$&  Matrix transpose, Hermitian and inverse transpose \\
    $[\mathbf{A}]_{i,j}$ &  Matrix $\mathbf{A}$ element at the $i$-th row \\ & and $j$-th column \\
    $[\mathbf{a}]_{i}$ &  The $i$-th element of vector $\mathbf{a}$  \\
    $\mathbf{I}_N$ & $N \times N$ identity matrix \\
    $\mathbf{0}_{N \times M}$ & $N \times M$ matrix with zeros\\
    $\mathbf{1}_{M \times N}$ & $M \times N$ matrix containing only $1$'s \\
    $\boldsymbol{\delta}_i$ & Aa vector with zeros and only one unity at $i$-th row \\
    $\Vert \mathbf{A} \Vert_{\rm F}$ & Matrix Frobenius norm $\sqrt{\mathbf{A}^{\rm H} \mathbf{A}}$ \\
    $\Vert \mathbf{x} \Vert_0$ & Pseudo-norm that counts the non-zero entries \\ 
    $\times$, $\circ$ and $\otimes$ & Scalar, Hadamard, and Kronecker products \\
    ${\rm blkdiag}(\mathbf{A}_1, \mathbf{A}_2, \ldots)$ & Indicates the operation $\sum_i \mathbf{e}_i \mathbf{e}_i^{\rm T} \otimes \mathbf{A}_i$ \\ & where $\mathbf{e}_i$'s form the canonical bases of $\mathbb{R}$ \\
    $\mathcal{F}$ & The set of constant-modulus complex numbers  \\
    $\mathbb{R}, \mathbb{C}$ & The sets of real and complex numbers\\
    $x\sim \mathcal{CN}(0,\sigma^2)$ & $x$ is a zero-mean complex Gaussian random\\ & variable with variance $\sigma^2$\\
    $\mathcal{E}\{x\}$ & Expectation of random variable $x$ \\
    \bottomrule
  \end{tabular}
  \label{table:notation}
\end{table}
\subsection{Notation and Organization} 
A summary of the notation used throughout this paper is included in Table~\ref{table:notation}, while its remainder is organized as follows: Section II presents the considered channel and system models, while Section III includes the proposed problem formulation and the position-aided channel estimation approach. Section IV verifies the proposed estimation framework through simulation results. Section V sums up the outcomes of the proposed framework and sketches directions for future research.

\section{System and Channel Models}\label{system_model}
We consider a point-to-point THz MIMO communication system comprising an $N$-antenna base station (BS) and an $M$-antenna user equipment (UE). The antenna elements at both nodes are structured in ULAs and, for their data communication, SC transmissions are adopted over a designated carrier frequency $f_c$ resulting in wavelength $\lambda_c=c/f_c$, a bandwidth $W$, and a corresponding sampling period $T_s=1/(2W)$.

\subsection{Wideband Channel Model}
While the prevalent trend in THz channel modeling literature typically involves LoS-dominant scenarios~\cite{9514889}, it is essential to recognize that multipath wideband channels may emerge in various THz scenarios~\cite{9794668}, especially within smart wireless environments featured by reconfigurable metasurfaces~\cite{THz_RIS}. To model multipath THz $M\times N$ MIMO wireless channels, we adopt the Saleh-Valenzuela channel model that is based on the time-cluster spatial-lobe approach~\cite{1146527}. To this end, the $t$-th time instance channel between each $n$-th transmit and each $m$-th receive antennas, with $t=1,2,\ldots, T$, $n=1,2,\ldots,N$ and $m=1,2,\ldots,M$, is given by the discrete time baseband model~\cite{Tse_Viswanath_2005}:
\begin{equation}\label{eq:channel_mn_t}
    h_{m,n}(t) = \sum_{\ell=1}^{L_p} a_\ell e^{-j2\pi d_{m,n,\ell}/\lambda_c} \text{sinc}(t -\tau_{m,n,\ell} ),
\end{equation}
where ${L_p}$ is the number of resolvable propagation paths due to multipath propagation; $\tau_{m,n,\ell}$ is the propagation time delay of the $\ell$-th channel path ($\ell=1,\ldots,{L_p}$) between the $m$-th receiving and $n$-th transmitting antennas; $d_{m,n,\ell}$ is the distance between these antennas for the $\ell$-th propagation path, and $\alpha_\ell$ is the instantaneous channel gain, modeled as $\alpha_\ell \sim \mathcal{N}(0, \sigma_a^2)$.

In the context of far-field communications, where the antenna array sizes are significantly smaller than the distances between them, the total separation between the $n$-th transmit antenna and the $m$-th receive antenna can be effectively approximated. This approximation involves summing the distance from the first transmit antenna to the first receive antenna, denoted as $d_\text{TX-RX}$, and the additional distance attributed to the signal's travel on the aperture, denoted as $d_{m,n,\ell}$. Specifically, this approximation is expressed as follows:
\begin{equation}
    d_\text{TOTAL} = d_\text{TX-RX} + d_{m,n,\ell}, 
\end{equation}
where we used the notation:
\begin{equation}\label{eq:d_mnl}
     d_{m,n,\ell} \triangleq (m-1) \Delta_c \cos \vartheta_{\text{TX}, \ell} - (n-1) \Delta_c \cos \vartheta_{\text{RX}, \ell}
\end{equation}
with $\Delta_c = \frac{\lambda_c}{2}=\frac{c}{2 f_c}$ denoting the antenna separation and $\vartheta_{\text{TX}, \ell}, \vartheta_{\text{RX}, \ell} \in [-\pi/2, \pi/2]$ are the physical angles of arrival and departure, respectively. Putting all above together, the MIMO channel representation in \eqref{eq:channel_mn_t} becomes:
\begin{equation}\label{eq:channel_mn_t2}
    h_{m,n}(t) = \sum_{\ell=1}^{L_p} \bar{a}_\ell \underbrace{e^{-j2\pi (m-1) \theta_{\text{RX}, \ell}} e^{j2\pi (n-1) \theta_{\text{TX}, \ell}}}_{\triangleq c_{m,n,\ell}} \text{sinc}(t - \tau_{m,n,\ell})
\end{equation}
with $\bar{\alpha}_\ell \triangleq \alpha_\ell e^{-j2\pi d_\text{TX-RX}/\lambda_c}$, $\theta_{\text{RX}, \ell} \triangleq \Delta_c \cos(\vartheta_{\text{RX}, \ell})$ is the normalized AoA and $\theta_{\text{TX}, \ell} \triangleq \Delta_c \cos(\vartheta_{\text{TX}, \ell})$ is the normalized angle of departure (AoD).

In THz channel models, the variance of each $\ell$-th channel gain coefficient depends on the respective propagation distance $d_\text{TX-RX}$ between the TX and RX as well as the carrier frequency $f_c$ via the following expression:
\begin{equation}\label{eq:channel_gain_los}
    \sigma_a^2(f_c) \triangleq \sqrt{\frac{N M}{{L_p}}} \frac{1}{d_{\text{TX-RX}}^{\xi_\ell}} e^{-\frac{1}{2} \mathcal{K}(f_c)},
\end{equation}
where $\xi_\ell$ represents the pathloss exponent which is equal to $\xi_1=2$ for the LoS path, i.e., $\ell=1$, and $\xi_\ell=3$ for $\ell=2,\ldots,{L_p}$. $\mathcal{K}(f)$ represents the function of the molecular absorption losses that depends on the carrier frequency \cite{5995306}.

\subsection{Received Signal Model}
The considered point-to-point MIMO communication takes place on a frame-by-frame basis, where the wireless channel remains constant during each frame but may change independently from one frame to another. Every frame consists of $T$ time slots, with $t=1,2,\ldots,T$, dedicated for channel estimation, whereas the rest of the frame is used for data communication. To estimate the intended THz MIMO channel matrix, which can be XL, the $M$-antenna BS utilizes training symbols for each of the $T$ slots used for channel estimation.

When the TX sends the symbol $\bar{q}_n(t) \in \mathbb{C}$ from each $n$-th antenna, the noiseless reception at the $m$-th RX antenna can be expressed as follows~\cite{Tse_Viswanath_2005}:
\begin{equation}\label{eq:channel_output}
\hat{y}_{m,n}(t) = \sum_{i=1}^{L_t} h_{m,n}(i) \bar{q}_n(t-i),
\end{equation}
where $L_t$ is the maximum number of filter taps due to the frequency selectivity of the wideband channel. Taking into account the channel model \eqref{eq:channel_mn_t2}, the received signal is given by: 
\begin{align}\label{eq:sampled_y}
\hat{y}_{m,n}(t) =  \sum_{i=1}^{L_t}  \sum_{\ell=1}^{L_p} \bar{\alpha}_\ell c_{m,n,\ell} q_{n}(t - i - \tau_{m,n,\ell}),
\end{align}
where $q_n(t - i - \tau_{m,n,\ell}) \triangleq \bar{q}_n(t-i) \text{sinc}(i-\tau_{m,n,\ell})$.

\subsection{Combined TX-RX Beam-Squint Effect}
In lower frequency ranges, in contrast to mmWave and THz, and in non-extreme MIMO systems, the carrier frequency $f_c$ does not become significantly small and the antenna index $m$ does not reach excessively large values, thus, the delay $\tau_{m,n,\ell}$ becomes negligible. However, for THz XL MIMO systems, $\tau_{m,n,\ell}$ shifts the sampling of the transmitted signal $\bar{q}_n(t)$ in \eqref{eq:sampled_y}, creating the beam-squint effect, where different RX antennas may sample different transmitted symbols $\bar{q}_n(t)$'s. This sampling shift depends on the propagation distance $d_{m,n,\ell}$, between the $n$-th TX and $m$-th RX antenna elements, along the $\ell$-th path. To this end, the aperture propagation delay time is defined as follows:
\begin{equation}
 \tau_{m,n,\ell} \triangleq d_{m,n,\ell}/c.
\end{equation}
Note that $\tau_{m,n,\ell}$ describes the combined beam-squint effect at the TX and RX. More specifically, this delay is given by the expression:
\begin{equation}
    \tau_{m,n,\ell} = \left((m-1) \frac{1}{2 f_c} \cos{\vartheta_{\text{RX}, \ell}} - (n-1) \frac{1}{2f_c} \cos{\vartheta_{\text{TX}, \ell}} \right).
\end{equation}
Since, the AoA and AoD of each $\ell$-th propagation path are bounded within $[-\pi/2, \pi/2]$, thus, $\cos\vartheta_{\text{RX},\ell}, \cos\vartheta_{\text{TX},\ell} \in [0, 1]$, the aperture delay time can be upper bounded as follows:
\begin{equation}
    \tau_{m,n,\ell} \le \frac{m-1}{2 f_c} + \frac{n-1}{2 f_c} \le \frac{M+N-2}{2 f_c}.
\end{equation}
To avoid aliasing, the sampling period needs to be chosen  to upper bound the propagation delay time, i.e., $T_s>\tau_{m,n,\ell}$. This setting also sets an upper bound for the number of antenna elements that will not be affected from to the spatial wideband effect, i.e., it must hold that $M+N \le \lceil 2 f_c T_s + 2 \rceil$.

\section{Proposed THz XL MIMO Channel Estimation}
The received signal model in \eqref{eq:sampled_y} captures the beam-squint effect along with the associated inter-symbol interference. In particular, the time delay $\tau_{m,n,\ell}$ $\forall$$m,n,\ell$ is intricately influenced by the beam squint at both the TX and RX as well as the propagation path characteristics resulting from the frequency selectivity of the wideband channel. In this section, we commence with the proposed THz XL MIMO channel estimation problem formulation for SC modulation incorporating the latter dual-wideband effect, and then, describe its efficient iterative solution. Finally, we present an initialization scheme for the proposed algorithm exploiting position information and analyze the overall complexity of the proposed channel estimation technique. 

\subsection{Problem Formulation}
Dual-wideband effects complicate the estimation of the channel impulse response. Particularly challenging is determining the propagation delays for every combination of transmit and receive antennas. Furthermore, extremely large antenna arrays significantly increase this complexity, as the number of propagation delays to be estimated grows dramatically. To address this challenge, we propose a novel decomposition approach. This approach breaks down the problem into two key components: the MIMO channel matrix, denoted by $\mathbf{H}$, and a sparse matrix, denoted by $\mathbf{E}$. This sparse matrix, $\mathbf{E}$, captures the effects of the propagation delays introduced by the dual-wideband channel.

\begin{proposition}\label{Prop:RX_symbols}
The input/output relationship for the considered $M \times N$ MIMO system over an ${L_p}$-tap multipath THz and wideband channel subject to the combined effects of maximum delay $K$ and after $T$ training instances can be expressed as: 
\begin{equation}\label{eq:Y_cap}
    \mathbf{Y} = \mathbf{H} \mathbf{\Phi} \mathbf{E} + \mathbf{N},
\end{equation}
where $\mathbf{Y} \in \mathbb{C}^{M \times T}$ denotes the matrix with all $T$ received training signals from all $M$ RX antennas in baseband and $\mathbf{H}$ represents the $M \times MNL$ effective channel matrix defined as:
\begin{equation}\label{eq:H_def}
\mathbf{H} \triangleq {\rm blkdiag}(\mathbf{h}_1^{\rm T}, \ldots, \mathbf{h}_M^{\rm T}),
\end{equation}
where $\forall$$m=1,\ldots,M$: 
$$\mathbf{h}_m \triangleq [h_{m,1,1}, \ldots,h_{m,1,{L_p}}, h_{m,2,1},\ldots, h_{m,2,{L_p}},\ldots, h_{m,N,{L_p}}].$$
The matrix $\mathbf{\Phi}$ is build using the training symbols $q_n$'s, as:
\begin{align} \label{eq:Phi_def}
\mathbf{\Phi} \triangleq [\mathbf{I}_M 
\otimes {\rm blkdiag} \left((\mathbf{I}_L \otimes \mathbf{q}_1^{\rm T}(1)), \ldots, (\mathbf{I}_L \otimes \mathbf{q}_N^{\rm T}(1))\right), \ldots \nonumber \\
\mathbf{I}_M 
\otimes {\rm blkdiag} \left((\mathbf{I}_L \otimes \mathbf{q}_1^{\rm T}(T)), \ldots, (\mathbf{I}_L \otimes \mathbf{q}_N^{\rm T}(T))\right) ]
\end{align}
with
$$
 \mathbf{q}_n(i) \triangleq [ q_n(i), \ldots, q_n(i-K)]^{\rm T} \in \mathbb{C}^{K \times 1}.
$$
Finally, the matrix $\mathbf{E} \in \{0,1\}^{MNLKT \times T}$ in~\eqref{eq:Y_cap} is introduced to represent the unknown time shifts and is defined as follows: 
\begin{equation}\label{eq:E_def}
\mathbf{E} \triangleq \mathbf{I}_T \otimes [e_{1,1,1}, \ldots, e_{1,1,{L_p}}, \ldots, e_{1,N,{L_p}},\ldots, e_{M,N,{L_p}}]^{\rm T},
\end{equation}
where $e_{m,n,\ell} \in \{0,1\}$ is a binary scalar quantity.
The term $\mathbf{N} \in \mathbb{C}^{M \times N}$ represents the complex AWGN matrix that is distributed as $\mathbf{N} \sim \mathcal{N}(\mathbf{0}_{M \times N}, \sigma_N^2 \mathbf{I}_M)$.
\end{proposition}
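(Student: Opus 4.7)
The plan is to obtain \eqref{eq:Y_cap} by a bookkeeping argument that rewrites the scalar input/output relation \eqref{eq:sampled_y} in matrix form, with the integer-valued delays $\tau_{m,n,\ell}$ absorbed into a binary selector matrix. First, I would collapse the two sums in \eqref{eq:sampled_y} by absorbing the sinc-tap into the gains, writing the noiseless output of the $m$-th RX antenna at time $t$ as $y_m(t)=\sum_{n=1}^{N}\sum_{\ell=1}^{L_p} h_{m,n,\ell}\,q_n(t-\tau_{m,n,\ell})$, where $h_{m,n,\ell}\triangleq\bar\alpha_\ell c_{m,n,\ell}$ matches the entries stacked inside $\mathbf{h}_m$. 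This is the form that directly ties the channel gains (which will become $\mathbf{H}$) to the delayed pilots (which will become $\mathbf{\Phi}\mathbf{E}$).

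Next I would introduce, for each triple $(m,n,\ell)$, a one-hot vector $\mathbf{e}_{m,n,\ell}\in\{0,1\}^{K+1}$ with a single unit entry at position $k=\tau_{m,n,\ell}$ (the quantized delay, upper-bounded by $K$ as derived at the end of Section II.C). Then $q_n(t-\tau_{m,n,\ell})=\mathbf{q}_n^{\rm T}(t)\,\mathbf{e}_{m,n,\ell}$, using exactly the delay-line vector $\mathbf{q}_n(t)=[q_n(t),\ldots,q_n(t-K)]^{\rm T}$ defined in the proposition. Substituting back, each scalar output reads $y_m(t)=\sum_{n,\ell}h_{m,n,\ell}\,\mathbf{q}_n^{\rm T}(t)\,\mathbf{e}_{m,n,\ell}$, which is a bilinear form in the channel coefficients and in the binary selectors, with the pilots acting as the ``kernel''.

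The last step is to show that the three compact factors in the proposition implement exactly this bilinear form column-by-column. For a fixed $t$, stacking the selectors $\mathbf{e}_{m,n,\ell}$ in the order used in \eqref{eq:E_def} yields one column of $\mathbf{E}$ (with $T$ such columns placed block-diagonally because each time instance has its own delay pattern, hence the $\mathbf{I}_T\otimes$ prefix). The inner ${\rm blkdiag}\!\left((\mathbf{I}_L\otimes\mathbf{q}_n^{\rm T}(t))\right)_{n=1}^{N}$ in \eqref{eq:Phi_def} applies the correct $\mathbf{q}_n^{\rm T}(t)$ to the selector block $\mathbf{e}_{m,n,\ell}$, while the outer $\mathbf{I}_M\otimes$ replicates this pilot action across the $M$ receive antennas so that the block-diagonal $\mathbf{H}$ from \eqref{eq:H_def} finally extracts, for each row $m$, only its own channel vector $\mathbf{h}_m^{\rm T}$. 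A dimension check confirms consistency: $\mathbf{H}\in\mathbb{C}^{M\times MNL_p}$, $\mathbf{\Phi}\in\mathbb{C}^{MNL_p\times MNL_pKT}$, and $\mathbf{E}\in\{0,1\}^{MNL_pKT\times T}$, so $\mathbf{H}\mathbf{\Phi}\mathbf{E}\in\mathbb{C}^{M\times T}$ as required. Appending the AWGN term $\mathbf{N}\sim\mathcal{CN}(\mathbf{0},\sigma_N^2\mathbf{I}_M)$ stacked over the $T$ training slots gives \eqref{eq:Y_cap}.

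The main obstacle is not analytical but combinatorial: making sure that the ordering inside $\mathbf{h}_m$, the inner/outer Kronecker factors of $\mathbf{\Phi}$, and the stacking convention for $\mathbf{E}$ are mutually consistent, so that the product picks the right delay tap of the right pilot stream and multiplies it by the right channel coefficient. I would verify this by tracing a single entry $y_m(t)$ through the triple product and checking that it reproduces $\sum_{n,\ell}h_{m,n,\ell}\,q_n(t-\tau_{m,n,\ell})$ exactly; a routine but error-prone index-matching exercise, which is the only substantive content of the proof.
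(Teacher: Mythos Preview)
Your proposal is correct and follows essentially the same approach as the paper: encode each integer delay $\tau_{m,n,\ell}$ via a one-hot selector so that $q_n(t-\tau_{m,n,\ell})=\mathbf{q}_n^{\rm T}(t)\,\mathbf{e}_{m,n,\ell}$, and then aggregate over $\ell$, $n$, $m$, and $t$ using block-diagonal and Kronecker stacking to arrive at $\mathbf{Y}=\mathbf{H}\mathbf{\Phi}\mathbf{E}+\mathbf{N}$. The paper's proof is more granular---it keeps the filter-tap index $i$ from \eqref{eq:sampled_y} explicit longer and absorbs indices one at a time through intermediate objects $\mathbf{Q}_n(t,i)$, $\bar{\mathbf{Q}}(t)$---but the argument and final bookkeeping are the same as yours (and your remark that ``each time instance has its own delay pattern'' is slightly off: the $\mathbf{I}_T\otimes$ in $\mathbf{E}$ simply replicates the \emph{same} selector $\mathbf{e}$ across the $T$ columns to match the horizontal concatenation in $\mathbf{\Phi}$).
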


\begin{proof}
The signal $\hat{y}_{m,n,\ell}(i)$ in \eqref{eq:sampled_y} can be rewritten as follows: 
\begin{align}
    \hat{y}_{m,n,\ell}(t, i) &= \bar{\alpha}_\ell c_{m,n,\ell} q_n(t - i - \kappa_{m,n,\ell}T_s) \\
    &= \bar{\alpha}_\ell c_{m,n,\ell}  \mathbf{q}_n^{\rm T}(t,i) \mathbf{e}_{m,n,\ell},
\end{align}
where $\mathbf{e}_{m,n,\ell} \in \{0,1\}^{K \times 1}$ is a $K \times 1$ binary vector with zeros everywhere except the $\kappa_{m,n,\ell}$-th position with $\Vert \mathbf{e}_{m,n,\ell} \Vert_0\ge 1$; $\tau_{m,n,\ell} \triangleq \kappa_{m,n,\ell} T_s$  for $\kappa_{m,n,\ell} \in [0, K]$, while $K T_s$ is the maximum delay.

While the symbol vector $\mathbf{q}_n(t, i) \in \mathbb{C}^{K \times 1}$ is known at the RX, the binary vector $\mathbf{e}_{m,n,\ell}$ has to be recovered for all TX and RX antenna elements (recall that $n=1,2,\ldots, N$ and $m=1,2,\ldots, M$) as well as for all channel propagation paths $\ell=1,2,\ldots,{L_p}$. The sampled received signal for each $m$-th RX and $n$-th TX antenna pair in \eqref{eq:sampled_y} can be expressed as follows:
\begin{align}
\hat{y}_{m,n}(t) &= \sum_{i=1}^{L_t} \sum_{\ell=1}^{L_p} \bar{\alpha}_\ell c_{m,n,\ell}  \mathbf{q}_n^{\rm T}(t, i) \mathbf{e}_{m,n,\ell}(i) \nonumber \\
&= \sum_{i=1}^{L_t} \mathbf{h}_{m,n}^{\rm T} \mathbf{Q}_n(t, i) \mathbf{e}_{m,n}(i), \label{eq:y_mni}
\end{align}
where we have used the definitions $\mathbf{Q}_n(t, i) \triangleq (\mathbf{I}_{L_p} \otimes \mathbf{q}_n^{\rm T}(t,i)) \in \mathbb{C}^{{L_p} \times {L_p} K}$ and $\mathbf{e}_{m,n} \triangleq [ \mathbf{e}_{m,n,1}^{\rm T},\ldots,\mathbf{e}_{m,n,L}^{\rm T} ]^{\rm T}\in \{0,1\}^{L K \times 1}$, 
and $\mathbf{h}_{m,n} \in \mathbb{C}^{{L_p} \times 1}$ includes the vectorized values of the channel gains for all ${L_p}$ paths, which is defined as:  
$$\mathbf{h}_{m,n} \triangleq [\bar{\alpha}_1 c_{m,n,1}, \ldots, \bar{\alpha}_L c_{m,n,{L_p}}]^{\rm T} \in \mathbb{C}^{{L_p} \times 1}$$ with elements 
$h_{m,n,\ell}=\bar{\alpha}_\ell c_{m,n,\ell}$ $\forall$$\ell$.
Similarly, \eqref{eq:y_mni} can be written as:
\begin{align}
\hat{y}_{m,n}(t) = \sum_{i=1}^{L_t} \mathbf{h}_{m,n}^{\rm T} \mathbf{Q}_n(t, i) \mathbf{e}_{m,n}(i) = \mathbf{h}_{m,n}^{\rm T} \mathbf{Q}_n(t) \mathbf{e}_{m,n},
\label{eq:y_mn}
\end{align}
with $\mathbf{Q}(t) \triangleq [\mathbf{Q}(t,1) \ldots \mathbf{Q}(t, L_t)]$ and $\mathbf{e}_{m,n}(i) \triangleq [\mathbf{e}_{m,n}^{\rm T}(1) \ldots \mathbf{e}_{m,n}^{\rm T}(T_p)]^{\rm T}$.

The noiseless received signal for all TX antennas is given by the superposition $\hat{y}_m(t) =  \sum_{n=1}^N \hat{y}_{m,n}(t)$. Using \eqref{eq:y_mn}, this baseband signal can be re-expressed as follows:
\begin{align}
   \hat{y}_m(t) &= \sum_{n=1}^N  \mathbf{h}_{m,n}^{\rm T} \mathbf{Q}_n(t) \mathbf{e}_{m,n} = \mathbf{h}_m^{\rm T} \bar{\mathbf{Q}}(t) \mathbf{e}_m,
\end{align}
where, for the last expression, we have defined the following quantities: 
\begin{align}
\bar{\mathbf{Q}}(t) &\triangleq {\rm blkdiag}(\mathbf{Q}_1(t),\ldots, \mathbf{Q}_N(t))  \in \mathbb{C}^{{L_p} N \times {L_p} N K},\nonumber\\ \nonumber
\mathbf{e}_m &\triangleq [\mathbf{e}_{m,n}^{\rm T}, \ldots, \mathbf{e}_{m,N}^{\rm T}]^{\rm T} \in \{0,1\}^{{L_p} K N \times 1},\\ \nonumber
\mathbf{h}_m &\triangleq [\mathbf{h}_{m,1}^{\rm T},\ldots,\mathbf{h}_{m,N}^{\rm T}]^{\rm T} \in \mathbb{C}^{N {L_p} \times 1}.
\end{align}
Next, for each training instance $t=1,\ldots,T$, we construct the receiving vector $\mathbf{y}(t) \in \mathbb{C}^{M \times 1}$ with the received training symbols from all the $M$ RX antennas, as follows:
\begin{equation}
     \mathbf{y}(t) = \mathbf{H} (\mathbf{I}_M \otimes \bar{\mathbf{Q}}(t)) \mathbf{e} + \mathbf{n}(t),
\end{equation}
where $\mathbf{I}_M \otimes \bar{\mathbf{Q}}(t)$ is an $MNL \times MNLK$ matrix,
\begin{align}
\mathbf{H} &\triangleq {\rm blkdiag}(\mathbf{h}_1^{\rm T}, \ldots,\mathbf{h}_M^{\rm T}) \in \mathbb{C}^{M \times M N {L_p}},\nonumber\\ \nonumber
\mathbf{e} &\triangleq [\mathbf{e}_1^{\rm T},\ldots,\mathbf{e}_M^{\rm T}]^{\rm T} \in \{0,1\}^{MNLK\times 1},
\end{align}
and $\mathbf{n}(t) \sim \mathcal{N}(\mathbf{0}_M, \sigma_N^2 \mathbf{I}_M)$. By collecting the received signals from all $T$ time instances and using the matrix notations $\mathbf{\Phi} \triangleq [\mathbf{I}_M 
\otimes \bar{\mathbf{Q}}_1,\ldots, \mathbf{I}_M 
\otimes \bar{\mathbf{Q}}_T] \in \mathbb{C}^{MNL \times MNLKT}$ and $\mathbf{E} \triangleq \mathbf{I}_{T} \otimes \mathbf{e}$, the $M \times T$ matrix $\mathbf{Y}$ in \eqref{eq:Y_cap} is obtained, which completes the proof.
\end{proof}

Capitalizing Proposition~\ref{Prop:RX_symbols}, we formulate our THz XL MIMO channel estimation objective incorporating the dual-wideband effect as the following optimization problem:
\begin{align}\label{eq:BLP}
   \mathcal{OP}: \min_{\mathbf{H}, \mathbf{E}} &\Vert \mathbf{Y} - \mathbf{H} \mathbf{\Phi} \mathbf{E} 
    \Vert_{\rm F}^2 \,\,\text{s.t.}\,\, [\mathbf{E}]_{p,q} \in \{0,1\} \nonumber \\ \nonumber
    &\forall p=1,\ldots,MNLK \,\,\text{and}\,\, \forall q=1,\ldots,T,\\ \nonumber
    & \mathbf{H} \,\,\text{as in}\,\,\eqref{eq:H_def}\,\,\text{and}\,\,\mathbf{E} \,\,\text{as in}\,\,\eqref{eq:E_def}.
\end{align}
Note that, $\mathcal{OP}$ belongs to the class of mixed-integer sparse optimization problems. 

\subsection{Idealized Solution of the Decomposed Problem}
Before delving into the proposed solution, let us consider a naive approach that involves decomposing the considered problem into the following two independent subproblems that can be solved separately~\cite{adams1993mixed}:
\begin{itemize}
    \item Assuming that the $\mathbf{H}^*$ channel matrix is known, solve for the beam-squint matrix $\mathbf{E}$:
    \begin{align}
    \mathcal{OP}_1: \mathbf{E}_{\rm opt} \triangleq \arg\min_{\mathbf{E}} & \Vert \mathbf{Y} - \mathbf{H}^* \mathbf{\Phi} \mathbf{E} \Vert_{\rm F}^2 \nonumber \\
    \text{s.t. }& \mathbf{E}_{p,q} \in \{0,1\} \text{ and } \mathbf{E}\,\,\text{as in}\,\,\eqref{eq:E_def}.\nonumber
    \end{align}
    \item Assuming that $\mathbf{E}$ is known, solve $\mathbf{H}$:
    \begin{align}
    \mathcal{OP}_2: \mathbf{H}_{\rm opt} \triangleq \arg\min_{\mathbf{H}} & \Vert \mathbf{Y} - \mathbf{H} \mathbf{\Phi} \mathbf{E}^*\Vert_{\rm F}^2.\nonumber
    \nonumber \\
    \text{s.t. }& \mathbf{H}\,\,\text{as in}\,\,\eqref{eq:H_def}.\nonumber
    \end{align}
\end{itemize}
Problem $\mathcal{OP}_1$ can be addressed by relaxing the integer constraint to a box constraint and employing a sparsity-promoting norm operator. On the other hand, $\mathcal{OP}_2$ admits a closed-form solution via unconstrained least squares, yielding:
\begin{equation}
\mathbf{H} = \mathcal{P}(\mathbf{Y} (\mathbf{\Phi} \mathbf{E}^{*})^\dagger),
\end{equation}
where $\mathcal{P}(\cdot)$ imposes the block structure of \eqref{eq:H_def}. This approach for solving $\mathcal{OP}$ in a decoupled way is outlined in Algorithm~\ref{alg:proposed1}. It is noted that, in this idealized case, the initializers $\mathbf{H}^*$ and $\mathbf{E}^*$ of each problem  are perfectly known, this two-stage successive solution of the optimization problems $\mathcal{OP}_1$ and $\mathcal{OP}_2$ results into the lowest estimation bounds. However, in practical scenarios where the initializers deviate from the optimal ones, this approach overlooks the propagated errors. This may result into much lower performance (i.e., estimation accuracy) or even divergence from the optimal solution.

\begin{algorithm}[t]
\caption{$\mathcal{OP}$'s Decomposed Solution}
\label{alg:proposed1}
\begin{algorithmic}[1]
\REQUIRE $\gamma$, $\mathbf{\Phi}$, $\mathbf{Y}$, $x_\text{BS}$, $x_\text{UE}$, and $I_\text{max}$.
\ENSURE $\mathbf{H}^{(I_\text{max})}$ and $\mathbf{E}^{(I_\text{max})}$.
\STATE Compute $\mathbf{\tilde{e}}$ that solves the relaxed problem:
\begin{equation*}
    \min_{\mathbf{\tilde{e}} \in [0,1]^{MNKL \times 1}} \Vert \mathbf{\tilde{e}} \Vert_1 + \frac{1}{2} \Vert \text{vec}(\mathbf{Y}) - (\mathbf{I} \otimes \mathbf{H}^{*} \mathbf{\Phi}) \text{vec}(\mathbf{I}_T \otimes \mathbf{\tilde{e}}) \Vert_2^2.
\end{equation*}
\STATE Calculate the threshold vector
    $\mathbf{\hat{e}} = {\rm thres}(\mathbf{\tilde{e}})$.
\STATE Obtain $\mathcal{OP}_1$'s solution as $\mathbf{E} = \mathbf{I}_T \otimes \mathbf{\hat{e}}$.
\STATE Solve $\mathcal{OP}_2$ as $\mathbf{H} = \mathcal{P}(\mathbf{Y} (\mathbf{\Phi} \mathbf{E}^{*})^\dagger)$.
\end{algorithmic}
\end{algorithm}

\subsection{Exploitation of the Channel's Sparse Structure}

The cost function introduced in $\mathcal{OP}$ leverages the sparsity of the matrix $\mathbf{E}$, which encodes the time shift delays. Since the THz XL MIMO channel matrix is also known to be sparse in the beamspace domain \cite{9514889}, exploiting this sparsity is beneficial for recovering the channel matrix with fewer training symbols. Recall that the matrix definition within our channel estimation context, $\mathbf{H}$ in~\eqref{eq:H_def}, deviates from the conventional channel matrix structure commonly employed. To this end, we derive a representation of this matrix into a similar to the beamspace domain, via the following proposition.

\begin{proposition}  
The XL MIMO channel matrix $\mathbf{H}$ defined in~\eqref{eq:H_def} and appearing in $\mathcal{OP}$ can be expressed as a block sparse matrix as follows:
\begin{equation}\label{eq:Z_def}
\mathbf{H}=\mathbf{F}_1 \mathbf{Z} \mathbf{F}_2
\end{equation}
    where we have used the matrix definitions:
        \begin{align}
           \mathbf{F}_1 \! &\triangleq \! {\rm blkdiag}((\mathbf{1}_{1\times {L_p} N} \otimes \boldsymbol{\delta}_1^{\rm T}) \mathbf{F}_{\rm TX}, \ldots, (\mathbf{1}_{1\times {L_p} N} \otimes \boldsymbol{\delta}_M^{\rm T}) \mathbf{F}_{\rm TX}), \\
           \mathbf{F}_2 \! &\triangleq \! \mathbf{I}_M \otimes {\rm blkdiag(\mathbf{I}_L \otimes \mathbf{F}_{\rm TX}^{\rm H} \boldsymbol{\delta}_1}, \ldots, \mathbf{I}_L \otimes \mathbf{F}_{\rm TX}^{\rm H}\boldsymbol{\delta}_N),
        \end{align}
    where $\mathbf{F}_1 \in \mathcal{C}^{M \times M^2 N {L_p}}$, $\mathbf{F}_2 \in \mathcal{C}^{M N^2 {L_p} \times M N {L_p}}$, and $\mathbf{Z} \in \mathcal{C}^{M^2 N {L_p} \times MN^2 {L_p}}$ is the modified beamspace for the channel formulation in \eqref{eq:H_def}.
\end{proposition}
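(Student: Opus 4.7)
The plan is to verify the factorization directly from the block-diagonal definition of $\mathbf{H}$ in \eqref{eq:H_def} by separately encoding the receive-side structure in $\mathbf{F}_1$, the transmit-side structure in $\mathbf{F}_2$, and absorbing the path gains and beamspace coefficients into the sparse middle factor $\mathbf{Z}$. The starting observation is that each scalar entry $h_{m,n,\ell}=\bar{\alpha}_\ell c_{m,n,\ell}$ is separable in $m$ and $n$ because of the product form of $c_{m,n,\ell}$, so one can write the $m$-th receive phasor as $\boldsymbol{\delta}_m^{\rm T}\mathbf{F}_{\rm TX}\mathbf{z}^{\rm RX}_\ell$ and the $n$-th transmit phasor as $\mathbf{z}^{\rm TX\, H}_\ell\mathbf{F}_{\rm TX}^{\rm H}\boldsymbol{\delta}_n$, where $\mathbf{z}^{\rm RX}_\ell$ and $\mathbf{z}^{\rm TX}_\ell$ are the (approximately one-hot) beamspace coefficients associated with the AoA/AoD of path $\ell$. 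This turns every $h_{m,n,\ell}$ into a bilinear form whose outer factors are DFT-row/column selectors and whose inner factor depends only on the path.

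Next I would reconstruct $\mathbf{F}_1$ from its action on the row (receive) side. Since $\mathbf{H}$ is block-diagonal across $m$, placing the $m$-th row block requires the selector $\boldsymbol{\delta}_m^{\rm T}$; to broadcast the same receive-side phasor across the $NL_p$ coordinates of $\mathbf{h}_m^{\rm T}$ one Kronecker-products with $\mathbf{1}_{1\times L_p N}$; block-diagonalizing across $m$ then produces exactly $\mathbf{F}_1$ as written in the proposition. Symmetrically, the $N$ transmit selectors $\mathbf{F}_{\rm TX}^{\rm H}\boldsymbol{\delta}_n$ are replicated over the $L_p$ paths through $\mathbf{I}_{L_p}\otimes\mathbf{F}_{\rm TX}^{\rm H}\boldsymbol{\delta}_n$, then block-diagonalized over $n$, and finally tiled over the $M$ receive blocks via $\mathbf{I}_M\otimes(\cdot)$, which reproduces $\mathbf{F}_2$.

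With the two outer factors fixed, $\mathbf{Z}$ is forced: its entries are exactly the path gains $\bar{\alpha}_\ell$ multiplied by the outer product $\mathbf{z}^{\rm RX}_\ell(\mathbf{z}^{\rm TX}_\ell)^{\rm H}$ for each path, embedded in the appropriate block positions of an $M^2NL_p\times MN^2L_p$ canvas so that the block sparsity pattern lines up with the $L_p$ active angle bins. A final entrywise check — evaluating $(\mathbf{F}_1\mathbf{Z}\mathbf{F}_2)_{m,(n-1)L_p+\ell}$ by tracing the selector $\boldsymbol{\delta}_m$ through the outer blocks and the selector $\boldsymbol{\delta}_n$ through the inner blocks — collapses to $\sum_\ell \bar{\alpha}_\ell(\mathbf{F}_{\rm TX}\mathbf{z}^{\rm RX}_\ell)_m(\mathbf{F}_{\rm TX}\mathbf{z}^{\rm TX}_\ell)_n^\ast$, which is precisely $h_{m,n,\ell}$ as required by \eqref{eq:H_def}.

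The main obstacle is purely bookkeeping: the definition of $\mathbf{h}_m$ fixes a specific lexicographic ordering over $(n,\ell)$, and the combination of ${\rm blkdiag}(\cdot)$, $\otimes\mathbf{I}_M$, and $\mathbf{1}_{1\times L_p N}$ must unfold in exactly that order. A single mismatched Kronecker ordering — say swapping the roles of $\mathbf{I}_{L_p}$ and $\mathbf{I}_N$, or placing $\mathbf{1}_{1\times L_p N}$ on the wrong side — would still give a product of the right dimensions but misalign $m$, $n$, and $\ell$. I would therefore devote most of the proof to carefully unpacking the Kronecker identities $(\mathbf{A}\otimes\mathbf{B})(\mathbf{C}\otimes\mathbf{D})=(\mathbf{AC})\otimes(\mathbf{BD})$ and ${\rm blkdiag}(\mathbf{A}_i)\,{\rm blkdiag}(\mathbf{B}_i)={\rm blkdiag}(\mathbf{A}_i\mathbf{B}_i)$ on the given factors, after which the identity $\mathbf{H}=\mathbf{F}_1\mathbf{Z}\mathbf{F}_2$ follows mechanically.
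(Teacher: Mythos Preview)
Your proposal is correct and follows essentially the same route as the paper: start from the separable form $h_{m,n,\ell}=\bar{\alpha}_\ell[\mathbf{a}_{\rm RX}(\ell)]_m[\mathbf{a}_{\rm TX}(\ell)]_n^\ast$, pass to beamspace via DFT matrices and sparse coefficient vectors, and then assemble the Kronecker/blkdiag structure over $\ell$, $n$, and $m$ in turn to recover $\mathbf{F}_1$, $\mathbf{F}_2$, and the middle factor $\mathbf{Z}$. The only slip is in your final entrywise check, where the $\sum_\ell$ should not appear since the column index $(n-1)L_p+\ell$ already fixes a single path; otherwise your bookkeeping plan matches the paper's incremental build-up from $h_{m,n,\ell}$ to $\mathbf{h}_{m,n}$ to $\mathbf{h}_m$ to $\mathbf{H}$.
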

\begin{proof}
Using the definitions within the proof of Proposition 1, the component for each $\ell$-th channel propagation path between each $m$-th RX and $n$-th TX antenna can be expressed as:
\begin{equation}
h_{m,n,\ell} \triangleq \bar{\alpha}_\ell c_{m,n,\ell} = \bar{\alpha}_\ell [\mathbf{a}_\text{RX}(\ell)]_m [\mathbf{a}_\text{TX}(\ell)]_n^*.
\end{equation}
where $\mathbf{a}_\text{RX}(\ell) \in \mathbb{C}^{M \times 1}$ and $\mathbf{a}_\text{TX}(\ell) \in \mathbb{C}^{N \times 1}$ represent the RX/TX array steering vectors, which can be expressed in the beamspace domain using the discrete Fourier matrices $\mathbf{F}_\text{RX}(\ell) \in \mathbb{C}^{M \times M}$ and $\mathbf{F}_\text{TX}(\ell) \in \mathbb{C}^{N \times N}$, as follows:
\begin{align}
    \mathbf{a}_{\rm RX}(\ell) &= \mathbf{F}_{\rm RX}^{\rm H} \mathbf{z}_{\rm RX}(\ell), \\
    \mathbf{a}_{\rm TX}(\ell) &= \mathbf{F}_{\rm TX}^{\rm H} \mathbf{z}_{\rm TX}(\ell),
\end{align}
with $\mathbf{z}_{\rm RX} \in \mathcal{C}^{M \times 1}$ and $\mathbf{z}_{\rm TX} \in \mathcal{C}^{N \times 1}$ being sparse vectors. Then, each channel coefficient $h_{m,n,\ell}$ can be expressed as:
\begin{align}
    h_{m,n,\ell} &\triangleq \bar{\alpha}_\ell [\mathbf{F}_{\rm RX} \mathbf{z}_{\rm RX}(\ell)]_m [\mathbf{F}_{\rm TX} \mathbf{z}_{\rm TX}(\ell)]_n^* \\
    &= \bar{\alpha}_\ell \boldsymbol{\delta}_m^{\rm T} \mathbf{F}_{\rm RX} \mathbf{z}_{\rm RX}(\ell) \mathbf{z}_{\rm TX}^{\rm H}(\ell) \mathbf{F}_{\rm TX}^{\rm H} \boldsymbol{\delta}_n.
\end{align}
By collecting the channel elements for all ${L_p}$ propagation paths ($\ell=1,2, \ldots, {L_p}$), it can be deduced that:
\begin{equation}
    \mathbf{h}_{m,n} = (\mathbf{1}_L \otimes \boldsymbol{\delta}_m^{\rm T} \mathbf{F}_{\rm RX}) \mathbf{Z} (\mathbf{I}_L \otimes \mathbf{F}_{\rm TX}^{\rm H} \boldsymbol{\delta}_n),
\end{equation} 
where $\mathbf{Z} \triangleq \bar{\alpha}_\ell \mathbf{z}_{\rm RX}(\ell) \mathbf{z}_{\rm TX}^{\rm H}(\ell)$. Afterwards, we collect the channel elements for all TX elements, i.e., $\forall$$n=1,2,\ldots,N$, yielding the following channel vector expression:
\begin{equation}
    \mathbf{h}_m = (\mathbf{I}_N \otimes \mathbf{Z}) {\rm blkdiag}(\mathbf{I}_L \otimes \mathbf{F}_{\rm TX}^{\rm H} \boldsymbol{\delta}_1, \ldots, \mathbf{I}_L \otimes \mathbf{F}_{\rm TX}^{\rm H} \boldsymbol{\delta}_N).
\end{equation}
The latter expression describes the input vectors at the right-hand side of the $\mathbf{H}$ expression in \eqref{eq:H_def}, and consequently in~\eqref{eq:Z_def}, thus, completing the proof.
\end{proof}
The beamspace of an example $12 \times 8$ MIMO channel matrix and its proposed block sparse representation in \eqref{eq:Z_def} via the previous Proposition 2 are illustrated in Fig.~\ref{fig:beamspace} for the case of ${L_p}=3$ channel propagation paths. It can be observed that the proposed beamspace forms a block sparse structure, according to which the non-zero values are concentrated along the diagonal.

\begin{figure}[t]
    \centering
    \includegraphics[scale=0.45]{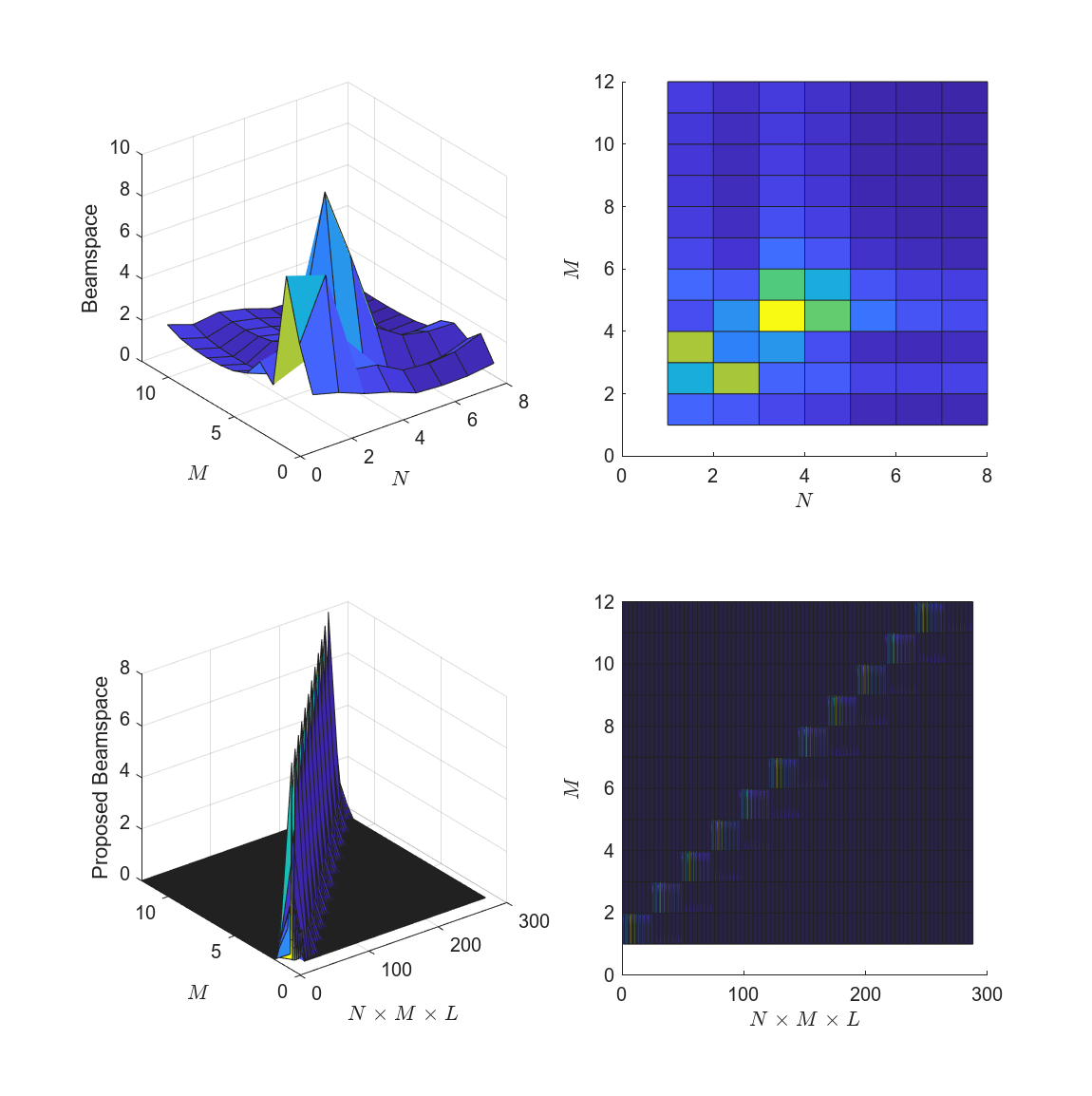}
    \caption{The beamspace $\Vert \mathbf{Z} \Vert_F^2$ of a $12 \times 8$ MIMO channel matrix (top) and that of its proposed block sparse representation in \eqref{eq:Z_def} via Proposition 2 (bottom) for the case of ${L_p}=3$ channel propagation paths.}
    \label{fig:beamspace}
\end{figure}

The inclusion of the channel sparsity property in the beamspace domain via Proposition~2 into $\mathcal{OP}$'s cost function permits us to re-express our channel estimation problem as:
\begin{align} 
    \mathcal{OP}_{\rm sp}:
     \min_{\mathbf{Z}, \mathbf{H}, \mathbf{E}} & \Vert \mathbf{Z} \Vert_1 + \Vert \mathbf{Y} - \mathbf{H} \mathbf{\Phi} \mathbf{E} 
    \Vert_{\rm F}^2 \,\,\text{s.t.}\,\, [\mathbf{E}]_{p,q} \in \{0,1\} \nonumber \\ \nonumber
    &\forall p=1,\ldots,MNLK \,\,\text{and}\,\, \forall q=1,\ldots,T,\\ \nonumber
    & \mathbf{H} \,\,\text{as in}\,\,\eqref{eq:H_def}\,\,\text{and}\,\,\mathbf{E} \,\,\text{as in}\,\,\eqref{eq:E_def}, \nonumber \\
    &\mathbf{H} = \mathbf{F}_1 \mathbf{Z} \mathbf{F}_2.
\end{align}
It is noteworthy that the introduction of matrix $\mathbf{Z}$ not only enhances the problem's solvability, via exploiting sparse optimization tools, but also transforms the problem into a formulation that allows for the efficient ADMM implementation~\cite{admm}, as it will be demonstrated in the sequel.

\subsection{Proposed Solution}
Let us first relax the binary constraint for $\mathbf{E}$, following the approach outlined in Algorithm 1. Then, $\mathcal{OP}_{\rm sp}$ becomes:
\begin{align}
    \mathcal{OP}_{\rm sp}':    \min_{\mathbf{Z}, \mathbf{H}, \mathbf{\tilde{e}}} & \Vert \mathbf{Z} \Vert_1 + \Vert \mathbf{\tilde{e}} \Vert_1 \nonumber \\
    & +  \frac{1}{2} \Vert \text{vec}(\mathbf{Y}) - (\mathbf{I} \otimes \mathbf{H}^{*} \mathbf{\Phi}) \text{vec}(\mathbf{I}_T \otimes \mathbf{\tilde{e}}) \Vert_2^2 \nonumber \\
    \text{s.t.}\,\,& \mathbf{H} = \mathbf{F}_1 \mathbf{Z} \mathbf{F}_2.\nonumber
\end{align}
The augmented Lagrangian for $\mathcal{OP}_{\rm sp}'$ is given by:
\begin{align}
    \mathcal{L}_\rho(\mathbf{Z}, \mathbf{H}, \mathbf{\tilde{e}}) = &\Vert \mathbf{Z} \Vert_1 + \Vert \mathbf{\tilde{e}} \Vert_1 \nonumber \\ &+\frac{1}{2} \Vert \text{vec}(\mathbf{Y}) - (\mathbf{I} \otimes \mathbf{H}^{*} \mathbf{\Phi}) \text{vec}(\mathbf{I}_T \otimes \mathbf{\tilde{e}}) \Vert_2^2 \nonumber \\
    &+ \langle \mathbf{C}, \mathbf{H} - \mathbf{F}_1 \mathbf{Z} \mathbf{F}_2  \rangle + \frac{\rho}{2} \Vert \mathbf{H} - \mathbf{F}_1 \mathbf{Z} \mathbf{F}_2 \Vert_{\rm F}^2.
\end{align}
The ADMM approach consists of the iterations~\cite{admm}:
     \begin{align}
     \mathbf{E}^{(j+1)} &= \arg \min_{\mathbf{E}} \mathcal{L}_\rho(\mathbf{Z}^{(j)}, \mathbf{H}^{(j)}, \mathbf{E}), \label{eq:admm_step1} \\
     \mathbf{Z}^{(j+1)} &= \arg \min_{\mathbf{Z}} \mathcal{L}_\rho(\mathbf{Z}, \mathbf{H}^{(j)}, \mathbf{E}^{(j+1)}), \label{eq:admm_step2} \\
     \mathbf{H}^{(j+1)} &= \arg \min_{\mathbf{H}} \mathcal{L}_\rho(\mathbf{Z}^{(j+1)}, \mathbf{H}, \mathbf{E}^{(j+1)}), \label{eq:admm_step3} \\
     \mathbf{C}^{(j+1)} &= \mathbf{C}^{(j)} + \rho (\mathbf{H}^{(j+1)} - \mathbf{F}_1 \mathbf{Z}^{(j+1)} \mathbf{F}_2), \label{eq:admm_dual}
    \end{align}   
Evidently, this iterative procedure necessitates the provision of initial values for the matrix $\mathbf{H}$, i.e.,  $\mathbf{H}^{(0)}$. We address this issue in a subsequent subsection. For the dual variable $\mathbf{C}^{(0)}$ needed in~\eqref{eq:admm_dual}, we initialize as the zeros' matrix.

\subsubsection{Derivation of $\mathbf{E}^{(j+1)}$}
To derive the matrix $\mathbf{E}$, we need to leverage its unique structure given by \eqref{eq:E_def}. Based on $\mathcal{OP}_{\rm sp}$, the optimization problem in \eqref{eq:admm_step1} can be expressed as follows:
\begin{equation}\label{eq:e_est}
    \min_{\mathbf{\tilde{e}}} \Vert \mathbf{\tilde{e}} \Vert_1 +\frac{1}{2} \Vert \text{vec}(\mathbf{Y}) - (\mathbf{I} \otimes \mathbf{H}^{*} \mathbf{\Phi}) \text{vec}(\mathbf{I}_T \otimes \mathbf{\tilde{e}}) \Vert_2^2,
\end{equation}
which can be easily solved as a standard LASSO problem. Then, a thresholding function $\text{thres}(\cdot)$ needs to applied at each element of the solution vector, yielding $\mathbf{e} \triangleq \text{thres}(\mathbf{\tilde{e}}) \in (0, 1)$ and $\Vert \mathbf{e} \Vert_0 = K_u \ll K$. 

\subsubsection{Derivation of $\mathbf{Z}^{(j+1)}$}
The optimization problem in~\eqref{eq:admm_step2} can be expressed as a standard LASSO problem. To do so, we first re-write its Lagrangian function with respect to $\mathbf{Z}$ as:
$$
\mathcal{L}_\rho(\mathbf{Z}) = \Vert \mathbf{Z} \Vert_1 + \langle \mathbf{C}^{(j)}, \mathbf{H}^{(j)} - \mathbf{F}_1 \mathbf{Z} \mathbf{F}_2 \rangle + \frac{\rho}{2} \Vert \mathbf{H}^{(j)} - \mathbf{F}_1 \mathbf{Z} \mathbf{F}_2 \Vert_{\rm F}^2.
$$
Adding the term $\frac{2}{\rho} \Vert \mathbf{C}^{(j)} \Vert_{\rm F}^2$ to the Lagrangian, we obtain the following equivalent problem:
\begin{equation}\label{eq:Z_prob}
    \min_{\mathbf{Z}} \Vert \mathbf{Z} \Vert_1 + \frac{\rho}{2} \Vert \frac{2}{\rho} \mathbf{C}^{(j)} + \mathbf{H}^{(j)} - \mathbf{F}_1 \mathbf{Z} \mathbf{F}_2 \Vert_{\rm F}^2.
\end{equation}
The estimation for $\mathbf{Z}^{(j+1)}$ in this problem can be achieved through various techniques addressing the LASSO problem. In this paper, we leverage CVX for its solution, while a comprehensive exploration of performance comparisons among different algorithms remains a subject for future investigation.  

\subsubsection{Derivation of $\mathbf{H}^{(j+1)}$}
Let us express the Lagrangian with respect to $\mathbf{H}$ for the problem in \eqref{eq:admm_step3} as follows:
\begin{align}
\mathcal{L}_\rho(\mathbf{H}) = &\Vert \mathbf{Y} - \mathbf{H}\mathbf{\Phi} \mathbf{E}^{(j+1)} \Vert_{\rm F}^2 + \langle\mathbf{C}^{(j)}, \mathbf{H}^{(j)} - \mathbf{F}_1 \mathbf{Z}^{(j+1)} \mathbf{F}_2 \rangle \nonumber \\
&+ \frac{\rho}{2} \Vert \mathbf{H}^{(j)} - \mathbf{F}_1 \mathbf{Z}^{(j+1)} \mathbf{F}_2  \Vert_{\rm F}^2.
\end{align}
To obtain the closed-form solution for this problem, we calculate the derivative of this Lagrangian, i.e.:
\begin{align}\label{eq:Lagrangian_H}
\frac{\partial \mathcal{L}_\rho (\mathbf{H})}{\mathbf{H}} = &\frac{\partial \Vert \mathbf{Y} \!-\! \mathbf{H}\mathbf{\Phi} \mathbf{E}^{(j+1)} \Vert_{\rm F}^2}{\partial \mathbf{H}} \!+\! \frac{\partial \langle\mathbf{C}^{(j)}, \mathbf{H} - \mathbf{F}_1 \mathbf{Z}^{(j+1)} \mathbf{F}_2 \rangle}{\partial \mathbf{H}}\nonumber \\
& + \frac{\rho}{2}\frac{\partial \Vert \mathbf{H} - \mathbf{F}_1 \mathbf{Z}^{(j+1)} \mathbf{F}_2 \Vert_{\rm F}^2}{\partial \mathbf{H}}.
\end{align}
Let us calculate each term of \eqref{eq:Lagrangian_H} separately. The first term can be expressed as follows:
$$
\frac{\partial \Vert \mathbf{Y} - \mathbf{H}\mathbf{\Phi} \mathbf{E}^{(j+1)} \Vert_{\rm F}^2}{\partial \mathbf{H}} = -2\left(\mathbf{Y} - \mathbf{H} \mathbf{\Phi} \mathbf{E}^{(j+1)}\right).
$$
Then, the second term is given by:
\begin{align}
& \frac{\partial \langle\mathbf{C}^{(j)}, \mathbf{H} - \mathbf{F}_1 \mathbf{Z}^{(j+1)} \mathbf{F}_2\rangle}{\partial \mathbf{H}} = \frac{\partial (\mathbf{C}^{(j)})^{\rm H}(\mathbf{H} - \mathbf{F}_1 \mathbf{Z}^{(j+1)} \mathbf{F}_2)}{\partial \mathbf{H}} \nonumber \\ &+ \frac{\partial (\mathbf{H} - \mathbf{F}_1 \mathbf{Z}^{(j+1)} \mathbf{F}_2)^{\rm H} \mathbf{C}^{(j)}}{\partial \mathbf{H}} = (\mathbf{C}^{(j)})^{\rm H} + \mathbf{C}^{(j)}
\end{align}
The third term is calculated as:
\begin{align}
\frac{\partial \Vert \mathbf{H} \!-\! \mathbf{F}_1 \mathbf{Z}^{(j+1)} \mathbf{F}_2 \Vert_{\rm F}^2}{\partial \mathbf{H}} = -2 \left(\mathbf{H} - \mathbf{F}_1 \mathbf{Z}^{(j+1)} \mathbf{F}_2\right).
\end{align}
Putting all above together, the closed-form solution of \eqref{eq:Lagrangian_H} is given by the following equation:
\begin{align}
&\frac{\partial \mathcal{L}_\rho (\mathbf{H})}{\partial \mathbf{H}} = -2\left(\mathbf{Y} - \mathbf{H} \mathbf{\Phi} \mathbf{E}^{(j+1)}\right) + (\mathbf{C}^{(j)})^{\rm H} + \mathbf{C}^{(j)} \nonumber \\ & - 2 \left(\mathbf{H} - \mathbf{F}_1 \mathbf{Z}^{(j+1)} \mathbf{F}_2\right)  = \mathbf{0}_{M \times MNL} \label{eq:H_est},
\end{align}
which can be easily solved over the unknown matrix $\mathbf{H}$.

The ADMM steps solving $\mathcal{OP}_{\rm sp}$ are summarized in Algorithm~\ref{alg:proposed2}. Therein, the initialization process described in Steps $1$ and $2$ for computing $\mathbf{H}^{(0)}$ will be described in the next subsection. To solve the problems included in Steps $4$ and $6$ we employ the CVX tool~\cite{cvx}.

\begin{algorithm}[t]
\caption{Proposed XL MIMO Estimation}
\label{alg:proposed2}
\begin{algorithmic}[1]
\REQUIRE $\mathbf{\Phi}$, $\mathbf{Y}$, $x_\text{BS}$, $x_\text{UE}$, and $I_\text{max}$.
\ENSURE $\mathbf{H}^{(I_\text{max})}$ and $\mathbf{E}^{(I_\text{max})}$.
\STATE Obtain TX-RX AoA and AoD from \eqref{eq:theta_rx} and \eqref{eq:theta_tx}.
\STATE Initialize the channel matrix $\mathbf{H}^{(0)}$ using \eqref{eq:steering_matrix_init}.
\FOR{$j=1,\ldots,I_\text{max}$} 
\STATE Solve $\eqref{eq:e_est}$ and compute $\mathbf{e}^{(j+1)}$.
\STATE Compute $\mathbf{E}^{(j+1)} = \mathbf{I}_T \otimes \mathbf{e}^{(j+1)}$.
\STATE Solve \eqref{eq:Z_prob} to obtain $\mathbf{Z}^{(j+1)}$.
\STATE Calculate $\mathbf{H}^{(j+1)}$ via \eqref{eq:H_est}.
\STATE Update the dual variable via \eqref{eq:admm_dual}.
\ENDFOR
\end{algorithmic}
\end{algorithm}

\subsection{Initialization based on UE Position Information}
UE position knowledge has been widely exploited to enhance XL MIMO channel estimation and reduce the number of training symbols~\cite{taranto2014location}.  In our case, to recover $\mathbf{E}$ in the proposed iterative algorithm, knowledge of the initial instantaneous channel matrix $\mathbf{H}^{(0)}$ is required. By representing the UE and BS positions on the 2D-plane as $(x_{\rm UE}, y_{\rm UE})$ and $(x_{\rm BS}, y_{\rm BS})$, respectively, the physical AoA and AoD are given by:
\begin{align}
    \hat{\vartheta}_{\text{RX}} = \arcsin \frac{x_{\rm UE}}{\sqrt{(x_\text{BS} - x_\text{UE})^2 + (y_\text{BS} - y_\text{UE})^2}},\label{eq:theta_rx}\\
    \hat{\vartheta}_{\text{TX}} = \arcsin \frac{x_{\rm BS}}{\sqrt{(x_\text{BS} - x_\text{UE})^2 + (y_\text{BS} - y_\text{UE})^2}}, \label{eq:theta_tx}
\end{align}
thus, the normalized versions are respectively $\hat{\theta}_\text{RX} = \frac{\Delta_c \sin \hat{\vartheta}_\text{RX}}{\lambda_c}$ and $\hat{\theta}_\text{TX} = \frac{\Delta_c \sin \hat{\vartheta}_1}{\lambda_c}$. Therefore, the respective steering vectors for the LoS component are computed as follows:
\begin{equation}
    \mathbf{a}_{\rm RX}(\hat{\theta}_\text{RX}) = \frac{1}{\sqrt{M}}\left[ 1, e^{-j2 \pi \hat{\theta}_{\rm RX}}, \ldots, e^{-j2 \pi (N-1) \hat{\theta}_{\rm RX}}\right]^{\rm T},
\end{equation}
\begin{equation}
    \mathbf{a}_{\rm TX}(\hat{\theta}_{\rm TX}) = \frac{1}{\sqrt{N}}\left[ 1, e^{-j2 \pi \hat{\theta}_{\rm TX}}, \ldots, e^{-j2 \pi (N-1) \hat{\theta}_{\rm TX}}\right]^{\rm T}.
\end{equation}
Finally, the instantaneous channel vectors for each RX antenna antenna at the UE can be approximated as follows:
\begin{equation}\label{eq:steering_matrix_init}
    \mathbf{h}_m^{(0)} \triangleq [\mathbf{a}_{\rm RX}(\hat{\theta}_{\rm RX})]_m \mathbf{a}^{\rm H}_{\rm TX}(\hat{\theta}_{\rm TX}).
\end{equation}
While this approach offers an approximation for the $m$-th channel, since the instantaneous gain cannot be retrieved in this manner, recall that the angular matrix $\mathbf{E}$ encodes the angular shift caused by the beam-squint effect. Therefore, it remains unaffected by signal amplitude variations.
  
\subsection{Complexity Analysis} 
We now elaborate into the computational complexity of the proposed XL MIMO channel estimation technique, as summarized in Algorithm~\ref{alg:proposed1}. The ensuing calculations are conducted over a span of $I_\text{max}$ iterations, ensuring the convergence of the ADMM technique.

\subsubsection{Calculation of the binary matrix $\mathbf{E}$} The $MNLK \times 1$ binary vector $\mathbf{e}^{(j+1)}$ and the $MNLKT \times T$ binary matrix $\mathbf{E}^{(j+1)}$ are obtained in Lines $4$ and $5$. To address the integer programming problem in~\eqref{eq:e_est}, we approximate it using box constraints, and then, utilize the CVX package to solve the resulting optimization. The latter constitutes the major computational cost, whose complexity can be upper bounded by $\mathcal{O}(MNKL)$. 

As it will be shown numerically in the next section (specifically in Section~\ref{sec:combined}), vector 
$\mathbf{e}$ is expected to be sparse for the considered dual-wideband fading conditions at THz frequencies. Interestingly, this sparsity can be leveraged to reduce the complexity of the $\mathbf{E}^{(j)}$ computation at each $j$-th algorithmic iteration. We propose to deploy a heuristic approach based on the orthogonal matching pursuit (OMP)~\cite{AVW22a} to replace \eqref{eq:e_est} and efficiently solve the $\ell_1$ optimization problem:
\begin{equation*}
    \min_{\mathbf{\tilde{e}}} \Vert \mathbf{\tilde{e}} \Vert_1 + \frac{1}{2} \Vert \text{vec}(\mathbf{Y}) - (\mathbf{I} \otimes \mathbf{H}^{(j)} \mathbf{\Phi}) \text{vec}(\mathbf{I}_T \otimes \mathbf{\tilde{e}}) \Vert_2^2.
\end{equation*}
By thresholding this problem solution vector $\mathbf{\tilde{e}}$ to obtain  $\mathbf{\hat{e}}$, an estimation for the binary matrix is obtained as $\mathbf{E}^{(j)} = \mathbf{I}_T \otimes \mathbf{\hat{e}}$. This procedures entails a much lower complexity order, namely $\mathcal{O}(MNK_uL)$ with $K_u \ll K$.

\subsubsection{Calculation of the matrix $\mathbf{Z}$} The beamspace matrix $\mathbf{Z}$ is computed in Line $6$ of Algorithm~\ref{alg:proposed1}. The computational complexity for solving the LASSO problem in~\eqref{eq:Z_prob} depends on the chosen algorithm. For example, algorithms like coordinate descent and proximal gradient descent have different complexities. As a general rule, the overall solution complexity scales with the number of features (e.g., the rank ${L_p}$ of the channel matrix) and the number of samples (i.e., the training length $T$). Therefore, a rough estimate of the complexity for solving \eqref{eq:Z_prob} is in the order of $\mathcal{O}({L_p} T)$.

\subsubsection{Calculation of the matrix $\mathbf{H}$} Line $7$ computes the $M \times NLK$ channel matrix. While equation \eqref{eq:H_est} needs to be solved, most matrices involved therein typically exhibit sparse structures. This sparsity significantly reduces the computational cost of matrix manipulations compared to their full-dimensional counterparts.

\section{Numerical Results and Discussion}

In this section, we evaluate the performance of the proposed technique via computer simulation results using MATLAB$^{\textrm{TM}}$.

\subsection{System Setup}
\begin{table}[t]
\caption{Default setting of the simulation parameters.}
\centering
\begin{tabular}{ll} 
 \textbf{Setting} & \textbf{Value} \\
\toprule
Monte-Carlo realizations & $R=100$ \\
Carrier Frequency & $f_c=150$ GHz \\ 
System Bandwidth & $B=10$ GHz \\
Channel Coherence time & $100$ nsec \\
Number of TX Antennas &  $64 \le N \le 256$  \\
Number of RX Antennas &  $64 \le M \le 256$ \\
BS transmit power  & $P_t=10$ dBm \\
Number of multipath components & ${L_p}=3$ \\
LOS pathloss exponent & $\xi_1=2$ \\
nLOS pathloss exponent & $\xi_\ell=3$, for $\ell=2,3,\ldots, {L_p}$ \\
TX and RX distance & $1 \le d_\text{BS-UE} \le 10$ m \\
\bottomrule
\end{tabular}
\label{tab:simul}
\end{table}

We assume that the considered point-to-point THz communication system takes place for distances larger than the Fraunhofer distance \cite{Balanis-2012-antenna}. According to the SC modulation under consideration, the TX communicates with the UE via data-carrying frames, where each frame is composed by $T$ time instances allocated for the training symbols. Thus, for $t=1,2,\ldots,T$, the TX transmits the training symbols $\bar{q}_n(t) \sim \mathcal{CN}(0,1)$. Moreover, each time frame has been considered as a new Monte-Carlo realization for all involved random variables (i.e., thermal noise and complex channel gains). The default channel and system parameter settings are included in Table \ref{tab:simul}.

To evaluate the performance of the proposed technique in Algorithm~2 and compare with other benchmarks, we have considered the normalized mean square error (NMSE) metric for the channel estimation, which is defined as follows:
\begin{equation}
    \text{NMSE} \triangleq \sum_{r=1}^R \sum_{m=1}^M \frac{\Vert \mathbf{h}_m - \hat{\mathbf{h}}_m \Vert}{\Vert \mathbf{h}_m \Vert},
\end{equation}
where $\hat{\mathbf{h}}_m$ represents the estimated vector of the $m$-th RX antenna and $R$ indicates the total number of Monte-Carlo realizations. The evaluation took place considering various scenarios with different Signal-to-Noise Ratios (SNRs). In addition, we include a benchmark comparison with the idealized solution described in Algorithm~1, which is constant over iterations.

\subsection{Convergence Analysis}
\begin{figure}[t]
    \centering
    \includegraphics[scale=0.5]{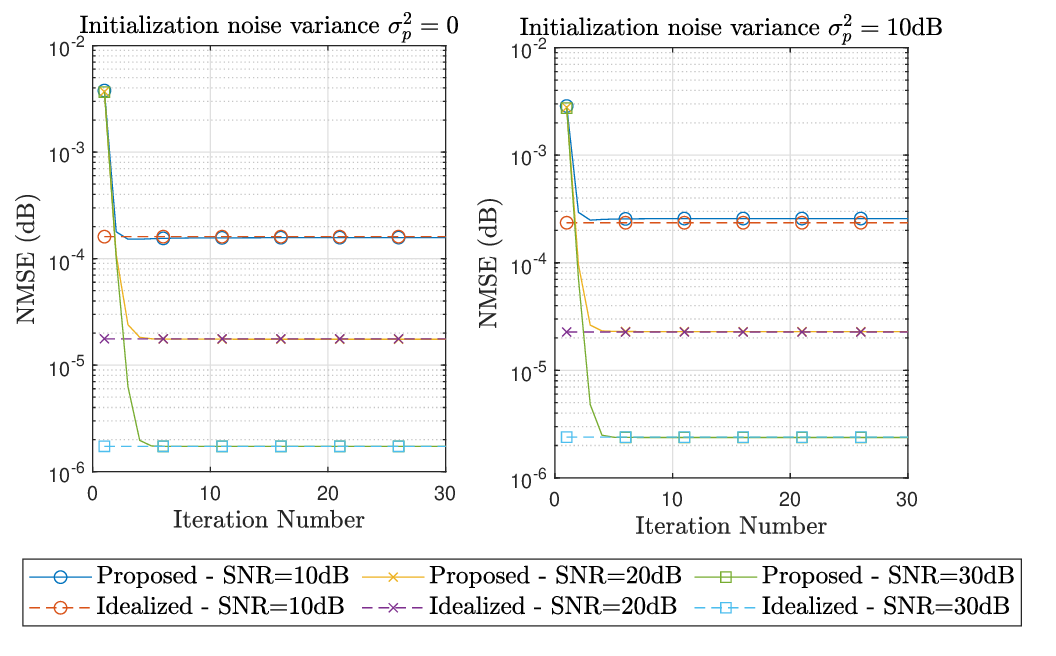}
    \caption{Convergence of the the proposed channel estimation algorithm with $N=M=256$, $T=3N=768$, and ${L_p}=6$.}
    \label{fig:convergence}
\end{figure}
We commence with a $256 \times 256$ MIMO system having a distance between TX and RX equal to $d_\text{TX-RX}=1$m and operating under a subTHz channel with ${L_p}=3$ propagation paths. In Fig.~\ref{fig:convergence} (a), we have set the number of training symbols as $T=3N=768$ to investigate the convergence speed of the proposed channel estimation algorithm, while the values for the other simulation parameters are given in Table~\ref{tab:simul}. The initialization of the proposed algorithm relied on an AoA information with $\sigma_p^2=0$. In contrast, the idealized approach has the advantage of being initialized with perfect knowledge of the channel or the binary beam-squint vector. It can be observed that the curves exhibit a swift convergence across all simulated SNRs, necessitating only a few iterations of the ADMM algorithm with a step size of $\rho=6$.

The position information which is used for initialization via \eqref{eq:steering_matrix_init} is assumed to be precise and devoid of any errors. However, in practical situations, it is anticipated that this information may be subject to noise due to various factors, including inaccuracies in position estimation and other disturbances, such as phase noise. To provide a more accurate representation of real-world phenomena, we adopt the following general noisy model for the channel initialization:
\begin{equation}\label{eq:steering_matrix_init_noisy}
    \mathbf{\hat{H}}^{(0)} = \mathbf{H}^{(0)} + \mathbf{W},
\end{equation}
where $[\mathbf{W}]_{i,j} \sim \mathcal{CN}(0, \sigma_p^2)$. In Fig. \ref{fig:convergence}(b), we include the convergence curves keeping the same parameter values as previously, but for the case of noisy positions, following the expression \eqref{eq:steering_matrix_init_noisy} with the setting $\sigma_p^2=10$dB. As shown, despite a slightly higher error bounds, the proposed algorithm still successfully reaches the idealized values.

\subsection{Combined Time Delay Profile}\label{sec:combined}
We now examine how the TX/RX spatial wideband effects, multipath propagation, and molecular absorption combinedly contribute to delays in the received signal. These delays are represented by the binary vector $\mathbf{e}  \in \{0,1\}^{MNLK \times 1}$ defined in \eqref{eq:E_def} within Proposition 1, where a unity value indicates a delay caused by any of the latter factors. Recall that the number of the non-zero (unity) values of $\mathbf{e}$ equals to $K_u$, while the maximum delay length is $K T_s$ in seconds. Considering an example setting with $f_c=150$ GHz and $N=M=32$, the vector $\mathbf{e}_{m,n}$ between the $n$-th TX and the $m$-th RX antenna elements will be composed by one unity value due to the beam-squint of the antennas and ${L_p}$ unities due to the combined contribution of multipath propagation and molecular absorption. This vector $\mathbf{e}_{m,n}$ is depicted over its index $k$, with $k=1,2,\ldots, K$, in Fig.~\ref{fig:plot_e}.

\begin{figure}[t]
    \centering
    \includegraphics[scale=0.45]{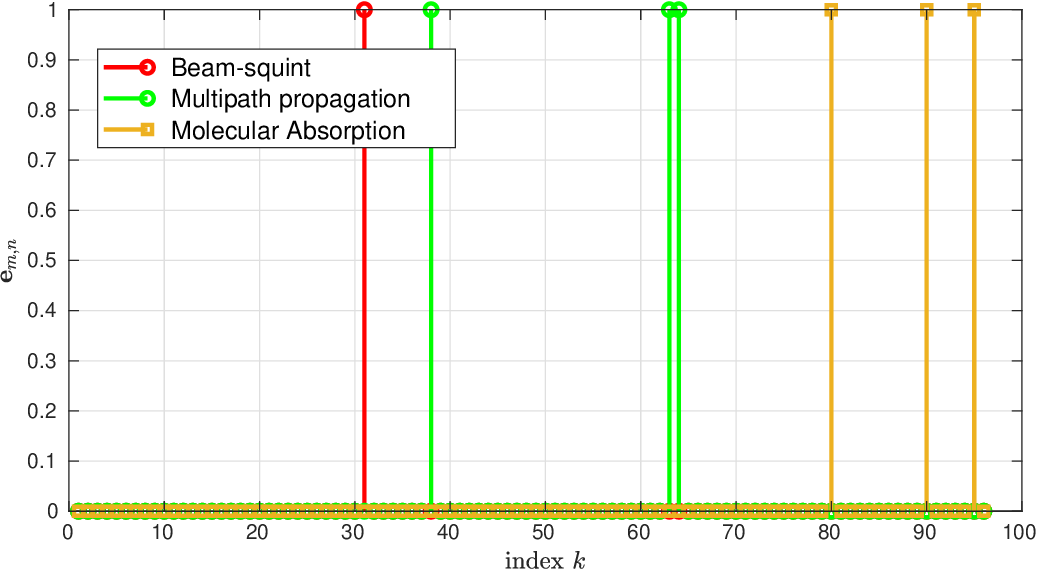}
    \caption{Simplified example of the combined time delay profile between the $n$-th TX and the $m$-th RX antenna elements for a $32 \times 32$ MIMO system resulting from the TX/RX spatial wideband effects, multipath propagation, and molecular absorption.}
    \label{fig:plot_e}
\end{figure}

\begin{figure}[t]
    \centering
    \includegraphics[scale=0.45]{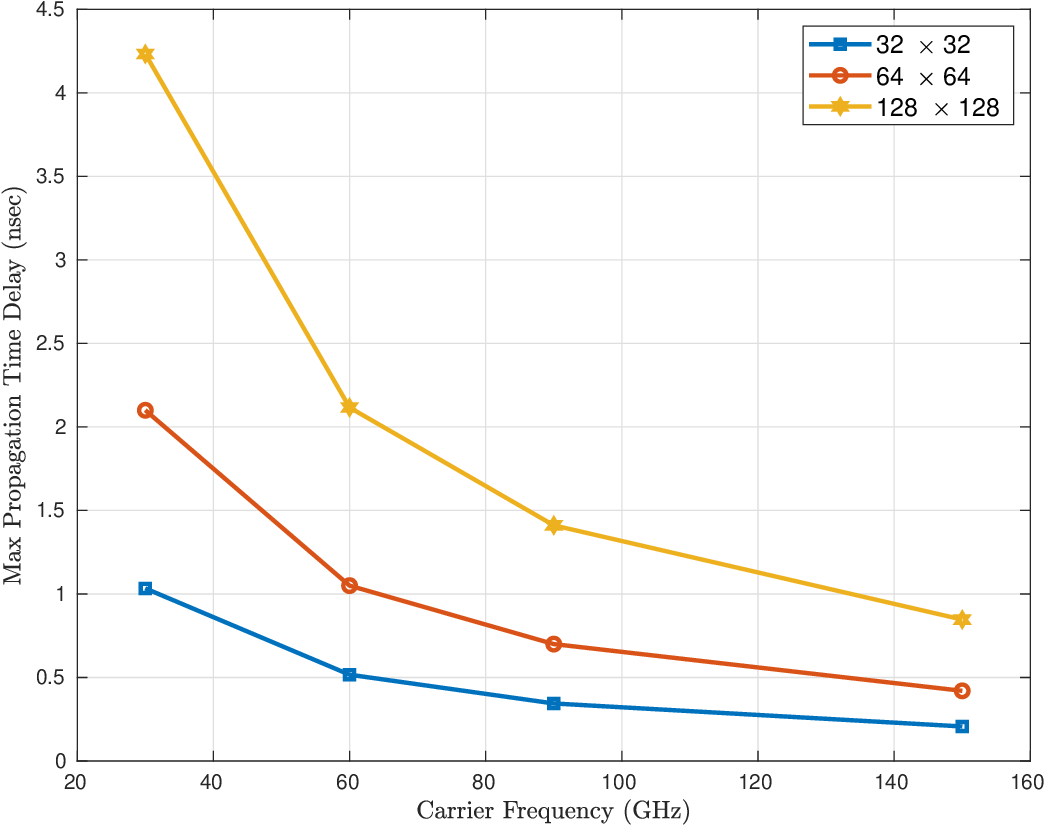}
    \caption{Time delay due to the spatial wideband effects versus the carrier frequency in GHz.}
    \label{fig:beamsquint_time_delay}
\end{figure}

\subsection{Channel Estimation Performance Evaluation}

Let us now investigate the NMSE performance of channel estimation as a function of the SNR. For comparisons, we have simulated the performance of the following benchmarks:
\begin{itemize}
\item \textit{Least-Squares}: This approach ignores the beam-squint effect and provides the following solution for channel estimation:
\begin{equation}
    \mathbf{h}_\text{LS} = \mathbf{\Phi}^\dagger \mathbf{y}.
\end{equation}
\item \textit{OMP}: This compressive sensing method solves the following system:
\begin{equation}
    \min_{\mathbf{z}} \Vert \mathbf{z} \Vert_1 + \Vert \mathbf{y} - \mathbf{\Phi} \mathbf{F} \mathbf{z} \Vert_2^2,
\end{equation}

\item \textit{Idealized (Algorithm 1)}: The two problems of beam-squint, as formulated in $\mathcal{OP}_1$, and channel estimation, expressed via $\mathcal{OP}_2$, have been solved independently (via the CVX package \cite{cvx}). When solving for $\mathbf{H}$, perfect knowledge of $\mathbf{E}$ was assumed. Respectively, when solving for $\mathbf{E}$, perfect knowledge for the channel $\mathbf{H}$ was assumed.
\end{itemize}

Channel estimation mainly relies on receiving known training pilots. In SC systems, pilots are typically placed at the beginning of each block, assuming a constant channel within that block. In contrast, MC systems, like Orthogonal Frequency Division Multiplexing (OFDM), often dedicate specific subcarriers for pilot transmission. For both SC and MC systems, we consider transmission within a frame of duration $T$ and bandwidth $B=M \Delta f$. The time-frequency (TF) domain was discretized into a lattice by sampling time and frequency at integer multiples of $1/B$ and $\Delta f$, respectively. When the frame duration is $30$ nsec, the bandwidth is $B=30$ GHz, and then, the length of the frame is $T=900$ time instances.

As previously discussed for MC systems, due to the beam-squint effect, each subcarrier experiences a different channel, thus, channel estimation has to performed for each carrier separately. This increases significantly the complexity and computational demands as compared to SC methods. A naive approach for channel estimation in wideband massive MIMO systems is to insert training pilots across all time instances and frequency bins. However, this significantly increases the pilot overhead within each frame. An alternative strategy is to divide the large antenna array into smaller subarrays. While each subarray remains susceptible to beam squint (a frequency-dependent beam direction), techniques like true time delays (TTD) can be employed to combine the signals from these subarrays. This allows for the use of OFDM on each individual subarray, effectively reducing the pilot overhead compared to the naive approach. However, the employment of TTDs may also introduce angular resolution reduction compared to phase-shifters, i.e., phase noise.

Figure~\ref{fig:mseVsnr} illustrates the comparison for a system with  $N=M=64$, $T=3N=192$, and ${L_p}=3$. The results indicate that both the proposed techniques in Algorithms~\ref{alg:proposed1} and~\ref{alg:proposed2} exhibit almost identical performance with the LS approach and the Idealized SC method. It is also shown that the OMP method exhibits the worst performance due to the grid discretization errors for the beamspace. In addition, it is observed that the MC based approach is not able to attain the same NMSE level as its SC counterpart. This is attributed to the limited number of training symbols. 
In Fig.~\ref{fig:mseVtraininglength}, we illustrate the NMSE performance as it varies with the training length $T$ for a fixed SNR value at $30$dB and a $64 \times 64$ system. The results verify that the proposed algorithms are able to achieve the idealized SC performance for training lengths over $T>N=64$.

\begin{figure}[t]
    \centering
    \includegraphics[scale=0.6]{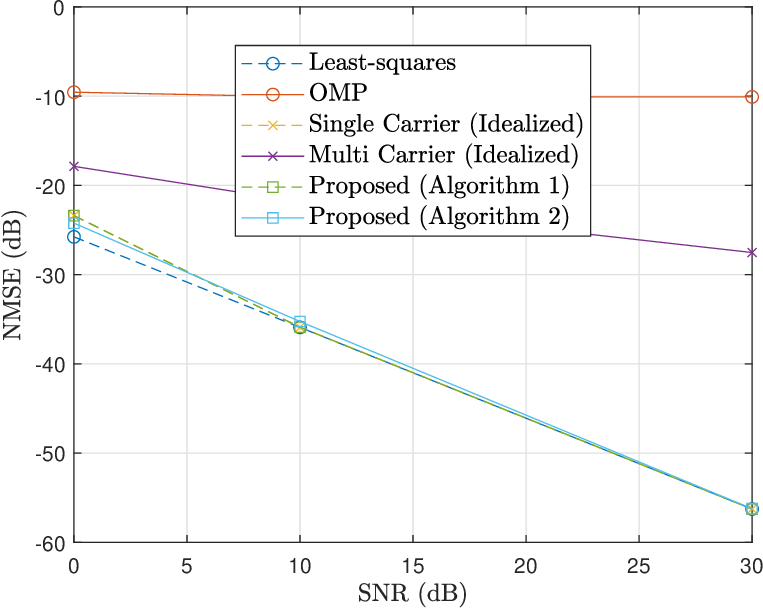}
    \caption{Channel estimation performance versus the SNR for a $64 \times 64$ MIMO system with ${L_p}=3$, $T=3N=192$, and $\sigma_p^2=10$ dB.}
    \label{fig:mseVsnr}
\end{figure}

\begin{figure}[t]
    \centering
    \includegraphics[scale=0.6]{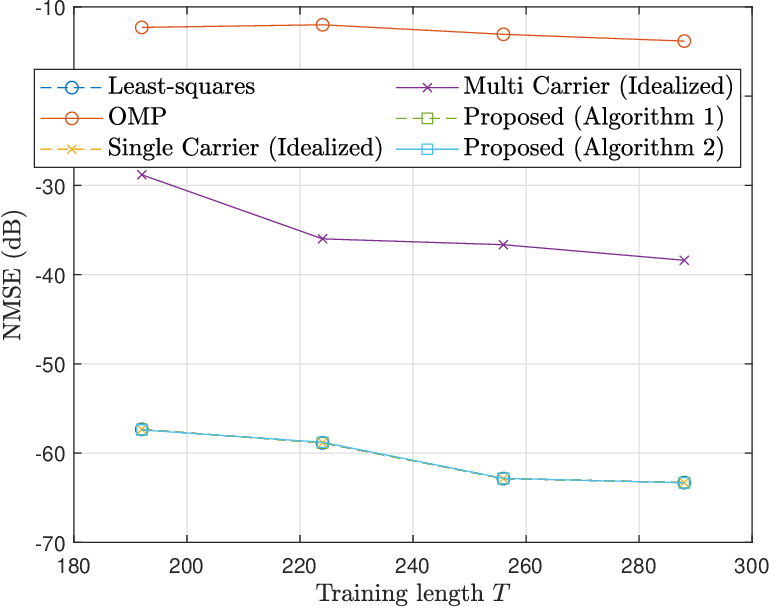}
    \caption{Channel estimation performance versus the training length $T$ with ${L_p}=3$ and SNR equal to $30$ dB.}
    \label{fig:mseVtraininglength}
\end{figure}

\section{Conclusions}
In this paper, we addressed XL MIMO channel estimation in the THz frequency band considering array-wide propagation delays causing frequency-selective beam squint. Traditional frequency modulation exhibits high peak-to-average power ratios, exacerbated by the low THz transmit powers. To confront with this issue, we presented a novel time-domain SC-based estimation approach, treating beam squint through sparse vector recovery via optimization. Our technique deployed alternating minimization to jointly handle the beam-squint effect and the MIMO channel sparsity. 
Given the inherent complexity of the considered non-linear XL MIMO estimation problem, our proposed technique leveraged the potential availability of position information of the user equipment at the base station to enhance the accuracy of the estimation process. The robustness of the proposed estimation technique in the presence of deviations from the true user position, resulting in erroneous partial composition of the LOS component of the unknown XL MIMO matrix, was thoroughly investigated. The presented performance evaluations showcased that the proposed XL MIMO estimation scheme exhibits superior
performance than conventional SC- and MC-based techniques,
approaching the idealized lower bound.



\bibliographystyle{ieeetr}
\bibliography{references}

\begin{thebibliography}{10}

\bibitem{9794668}
C.~Han, Y.~Wang, Y.~Li, Y.~Chen, N.~A. Abbasi, T.~Kürner, and A.~F. Molisch, ``{Terahertz Wireless Channels: A Holistic Survey on Measurement, Modeling, and Analysis},'' {\em IEEE Communications Surveys \& Tutorials}, vol.~24, no.~3, pp.~1670--1707, 2022.

\bibitem{6gvision}
{IMT-2030 (6G) Promotion Group}, ``{6G Vision and Candidate Technologies},'' 2021.

\bibitem{XLMIMO_tutorial}
Z.~Wang, J.~Zhang, H.~Du, D.~Niyato, S.~Cui, B.~Ai, M.~Debbah, K.~Letaief, and H.~Poor, ``{A Tutorial on Extremely Large-Scale {MIMO} for {6G}: {F}undamentals, Signal Processing, and Applications},'' {\em IEEE Communications Surveys \& Tutorials}, 2024.
\newblock Early access.

\bibitem{8882325}
B.~Wang, M.~Jian, F.~Gao, G.~Y. Li, and H.~Lin, ``{Beam Squint and Channel Estimation for Wideband mmWave Massive MIMO-OFDM Systems},'' {\em IEEE Transactions on Signal Processing}, vol.~67, no.~23, pp.~5893--5908, 2019.

\bibitem{5995306}
J.~M. Jornet and I.~F. Akyildiz, ``{Channel Modeling and Capacity Analysis for Electromagnetic Wireless Nanonetworks in the Terahertz Band},'' {\em IEEE Transactions on Wireless Communications}, vol.~10, no.~10, pp.~3211--3221, 2011.

\bibitem{9049103}
Y.~Lin, S.~Jin, M.~Matthaiou, and X.~You, ``{Tensor-Based Channel Estimation for Millimeter Wave MIMO-OFDM With Dual-Wideband Effects},'' {\em IEEE Transactions on Communications}, vol.~68, no.~7, pp.~4218--4232, 2020.

\bibitem{9351751}
I.-S. Kim and J.~Choi, ``{Spatial Wideband Channel Estimation for mmWave Massive MIMO Systems With Hybrid Architectures and Low-Resolution ADCs},'' {\em IEEE Transactions on Wireless Communications}, vol.~20, no.~6, pp.~4016--4029, 2021.

\bibitem{xu_overcoming_2023}
L.~Xu, L.~Cheng, N.~Wong, Y.-C. Wu, and H.~V. Poor, ``Overcoming {Beam} {Squint} in {Dual}-{Wideband} {mmWave} {MIMO} {Channel} {Estimation}: {A} {Bayesian} {Multi}-{Band} {Sparsity} {Approach},'' June 2023.
\newblock arXiv:2306.11149 [eess].

\bibitem{9866847}
S.~Tarboush, H.~Sarieddeen, M.-S. Alouini, and T.~Y. Al-Naffouri, ``{Single- Versus Multicarrier Terahertz-Band Communications: A Comparative Study},'' {\em IEEE Open Journal of the Communication Society}, vol.~3, pp.~1466--1486, 2022.

\bibitem{8645479}
V.~Schram, A.~Moldovan, and W.~H. Gerstacker, ``{Compressive Sensing for Indoor THz Channel Estimation},'' in {\em 2018 52nd Asilomar Conference on Signals, Systems, and Computers}, pp.~1539--1546, 2018.

\bibitem{8066476}
{\em {IEEE Standard for High Data Rate Wireless Multi-Media Networks--Amendment 2: 100 Gb/s Wireless Switched Point-to-Point Physical Layer}}.
\newblock IEEE Standards Association, 2017.

\bibitem{10143629}
W.~Yu, Y.~Shen, H.~He, X.~Yu, S.~Song, J.~Zhang, and K.~B. Letaief, ``{An Adaptive and Robust Deep Learning Framework for THz Ultra-Massive MIMO Channel Estimation},'' {\em IEEE Journal of Selected Topics in Signal Processing}, vol.~17, no.~4, pp.~761--776, 2023.

\bibitem{peng_precoding_2019}
B.~Peng, S.~Wesemann, K.~Guan, W.~Templ, and T.~Kürner, ``{Precoding and Detection for Broadband Single-Carrier Terahertz Massive MIMO Systems Using LSQR Algorithm},'' {\em IEEE Trans. Wireless Commun.}, vol.~18, no.~2, pp.~1026--1040, 2019.

\bibitem{8778669}
A.~Kaushik, J.~Thompson, E.~Vlachos, C.~Tsinos, and S.~Chatzinotas, ``{Dynamic RF Chain Selection for Energy Efficient and Low Complexity Hybrid Beamforming in Millimeter Wave MIMO Systems},'' {\em IEEE Transactions on Green Communications and Networks}, vol.~3, no.~4, pp.~886--900, 2019.

\bibitem{FD_HMIMO_2023}
I.~Gavras, M.~Islam, B.~Smida, and G.~Alexandropoulos, ``{Full Duplex Holographic MIMO for Near-Field Integrated Sensing and Communications},'' in {\em European Signal Processing Conference (EUSIPCO)}, (Helsinki, Finland), Sep. 2023.

\bibitem{9205899}
A.~Kaushik, E.~Vlachos, C.~Tsinos, J.~Thompson, and S.~Chatzinotas, ``{Joint Bit Allocation and Hybrid Beamforming Optimization for Energy Efficient Millimeter Wave MIMO Systems},'' {\em IEEE Transactions on Green Communications and Networking}, vol.~5, no.~1, pp.~119--132, 2021.

\bibitem{timereversal6g}
G.~Alexandropoulos, A.~Mokh, R.~Khayatzadeh, J.~de~Rosny, M.~Kamoun, A.~Ourir, A.~Tourin, M.~Fink, and M.~Debbah, ``{Time Reversal for 6G Wireless Communications: Novel Experiments, Opportunities, and Challenges},'' {\em IEEE Vehicular Technology Magazine}, vol.~17, pp.~74--82, Dec 2022.

\bibitem{9216613}
A.~Faisal, H.~Sarieddeen, H.~Dahrouj, T.~Y. Al-Naffouri, and M.-S. Alouini, ``{Ultra-massive MIMO Systems at Terahertz Bands: Prospects and Challenges},'' {\em IEEE Vehicular Technology Magazine}, vol.~15, no.~4, pp.~33--42, 2020.

\bibitem{10045774}
B.~Ning, Z.~Tian, W.~Mei, Z.~Chen, C.~Han, S.~Li, J.~Yuan, and R.~Zhang, ``{Beamforming Technologies for Ultra-Massive MIMO in Terahertz Communications},'' {\em IEEE Open Journal of the Communications Society}, vol.~4, pp.~614--658, 2023.

\bibitem{9324910}
N.~Shlezinger, G.~C. Alexandropoulos, M.~F. Imani, Y.~C. Eldar, and D.~R. Smith, ``{Dynamic Metasurface Antennas for 6G Extreme Massive MIMO Communications},'' {\em IEEE Wireless Communications}, vol.~28, no.~2, pp.~106--113, 2021.

\bibitem{hardware2020icassp}
G.~Alexandropoulos and E.~Vlachos, ``{A Hardware Architecture for Reconfigurable Intelligent Surfaces with Minimal Active Elements for Explicit Channel Estimation},'' in {\em 2020 IEEE International Conference on Acoustics, Speech, and Signal Processing (ICASSP)}, (Barcelona, Spain), 2020.

\bibitem{HRIS_Mag}
G.~Alexandropoulos, N.~Shlezinger, I.~Alamzadeh, M.~Imani, H.~Zhang, and Y.~Eldar, ``{Hybrid Reconfigurable Intelligent Metasurfaces: Enabling Simultaneous Tunable Reflections and Sensing for 6G Wireless Communications},'' {\em IEEE Vehicular Technology Magazine}, vol.~19, no.~1, p.~75–84, 2024.

\bibitem{CE_overview_2022}
M.~Jian, G.~C. Alexandropoulos, E.~Basar, C.~Huang, R.~Liu, Y.~Liu, and C.~Yuen, ``{Reconfigurable Intelligent Surfaces for Wireless Communications: Overview of Hardware Designs, Channel Models, and Estimation Techniques},'' {\em IEEE Intelligent Converged Networks}, vol.~3, pp.~1--32, Mar. 2022.

\bibitem{amplifying_RIS_2022}
R.~Tasci, F.~Kilinc, E.~Basar, and G.~Alexandropoulos, ``{A New RIS Architecture with a Single Power Amplifier: Energy Efficiency and Error Performance Analysis},'' {\em IEEE Access}, vol.~10, pp.~44804--44815, May 2022.

\bibitem{singh2023indexed}
R.~Singh, A.~Kaushik, W.~Shin, G.~C. Alexandropoulos, M.~Toka, and M.~Di~Renzo, ``{Indexed Multiple Access with Reconfigurable Intelligent Surfaces: The Reflection Tuning Potential},'' {\em IEEE Communications Magazine}, vol.~62, no.~4, pp.~120--126, 2024.

\bibitem{pimrc2023}
B.~Al-Nahhas, Q.-U.-A. Nadeem, A.~Kaushik, and A.~Chaaban, ``{Reconfigurable Intelligent Surfaces-Assisted Communication Under Different CSI Assumptions},'' in {\em IEEE International Mediterranean Conference on Communications and Networking (MeditCom) Workshop}, pp.~1--6, 2023.

\bibitem{10119089}
E.~Vlachos and A.~Kaushik, ``{Subset Selection Based RIS-Aided Beamforming for Joint Radar-Communications},'' in {\em 2023 IEEE Wireless Communications and Networking Conference (WCNC)}, pp.~1--6, 2023.

\bibitem{Strinati2021Reconfigurable}
E.~Calvanese~Strinati, G.~C. Alexandropoulos, H.~Wymeersch, B.~Denis, V.~Sciancalepore, R.~D'Errico, A.~Clemente, D.-T. Phan-Huy, E.~De~Carvalho, and P.~Popovski, ``{Reconfigurable, Intelligent, and Sustainable Wireless Environments for {6G} Smart Connectivity},'' {\em IEEE Communications Magazine}, vol.~59, no.~10, pp.~99--105, 2021.

\bibitem{Alexandropoulos2022Pervasive}
G.~C. Alexandropoulos, K.~Stylianopoulos, C.~Huang, C.~Yuen, M.~Bennis, and M.~Debbah, ``Pervasive machine learning for smart radio environments enabled by reconfigurable intelligent surfaces,'' {\em Proc. IEEE}, vol.~110, no.~9, pp.~1494--1525, 2022.

\bibitem{RIS_challenges}
G.~C. Alexandropoulos, M.~Crozzoli, D.-T. Phan-Huy, K.~D. Katsanos, H.~Wymeersch, P.~Popovski, P.~Ratajczak, Y.~B{\'e}n{\'e}dic, M.-H. Hamon, S.~Herraiz~Gonzalez, R.~D'Errico, and E.~Calvanese~Strinati, ``{RIS-Enabled Smart Wireless Environments: Deployment Scenarios, Network Architecture, Bandwidth, and Area of Influence},'' {\em EURASIP Journal on Wireless Communications and Networking}, vol.~103, pp.~1--38, 2023.

\bibitem{TERRAMETA_website}
``{TERahertz ReconfigurAble METAsurfaces for ultra-high rate wireless communications ({TERRAMETA}) {SNS JU} {P}roject}.'' \url{https://terrameta-project.eu/}.

\bibitem{kim_spatial_2021}
I.-S. Kim and J.~Choi, ``{Spatial Wideband Channel Estimation for mmWave Massive MIMO Systems With Hybrid Architectures and Low-Resolution ADCs},'' {\em IEEE Transactions on Wireless Communications}, vol.~20, no.~6, pp.~4016--4029, 2021.

\bibitem{9399122}
K.~Dovelos, M.~Matthaiou, H.~Q. Ngo, and B.~Bellalta, ``{Channel Estimation and Hybrid Combining for Wideband Terahertz Massive MIMO Systems},'' {\em IEEE Journal on Selected Areas in Communications}, vol.~39, no.~6, pp.~1604--1620, 2021.

\bibitem{8123513}
K.~Tsujimura, K.~Umebayashi, J.~Kokkoniemi, J.~Lehtomäki, and Y.~Suzuki, ``A causal channel model for the terahertz band,'' {\em IEEE Transactions on Terahertz Science and Technology}, vol.~8, no.~1, pp.~52--62, 2018.

\bibitem{9514889}
H.~Sarieddeen, M.-S. Alouini, and T.~Y. Al-Naffouri, ``{An Overview of Signal Processing Techniques for Terahertz Communications},'' {\em Proceedings of the IEEE}, vol.~109, no.~10, pp.~1628--1665, 2021.

\bibitem{THz_RIS}
S.~Matos, Y.~Ma, Q.~Luo, J.~Deuermeier, L.~Lucci, P.~Gavriilidis, A.~Kiazadeh, V.~Lain-Rubio, T.~D. Phan, P.~J. Soh, A.~Clemente, L.~M. Pessoa, and G.~C. Alexandropoulos, ``{Reconfigurable Intelligent Surfaces for {THz}: {H}ardware Impairments and Switching Technologies},'' in {\em 2024 IEEE International Conference on Electromagnetic Advanced Applications (ICEAA)}, (Lisbon, Portugal), 2024.

\bibitem{1146527}
A.~Saleh and R.~Valenzuela, ``{A Statistical Model for Indoor Multipath Propagation},'' {\em IEEE J. Sel. Areas Commun.}, vol.~5, no.~2, pp.~128--137, 1987.

\bibitem{Tse_Viswanath_2005}
D.~Tse and P.~Viswanath, {\em {Fundamentals of Wireless Communication}}.
\newblock Cambridge: Cambridge University Press, 2005.

\bibitem{adams1993mixed}
W.~P. Adams and H.~D. Sherali, ``Mixed-integer bilinear programming problems,'' {\em Mathematical Programming}, vol.~59, no.~1-3, pp.~279--305, 1993.

\bibitem{admm}
S.~P. Boyd, N.~Parikh, E.~Chu, B.~Peleato, and J.~Eckstein, ``{Distributed Optimization and Statistical Learning via the Alternating Direction Method of Multipliers},'' {\em Foundations and Trends in Machine Learning}, vol.~3, no.~1, pp.~1--122, 2011.

\bibitem{cvx}
M.~Grant and S.~Boyd, ``{CVX: Matlab Software for Disciplined Convex Programming},'' 2014.
\newblock Version 2.1.

\bibitem{taranto2014location}
R.~D. Taranto, S.~Muppirisetty, R.~Raulefs, D.~Slock, T.~Svensson, and H.~Wymeersch, ``{Location-Aware Communications for 5G Networks: How Location Information Can Improve Scalability, Latency, and Robustness of 5G},'' {\em IEEE Signal Processing Magazine}, vol.~31, no.~6, pp.~102--112, 2014.

\bibitem{AVW22a}
G.~C. Alexandropoulos, I.~Vinieratou, and H.~Wymeersch, ``{Localization via Multiple Reconfigurable Intelligent Surfaces Equipped With Single Receive {RF} Chains},'' {\em IEEE Wireless Communications Letters}, vol.~11, pp.~1072--1076, may 2022.

\bibitem{Balanis-2012-antenna}
C.~A. Balanis, {\em {Antenna Theory: Analysis and Design}}.
\newblock Wiley-Interscience, 2005.

\end{thebibliography}

\end{document}